\newcommand{\mt}{\@ifnextchar^{}{^{}}}
\newcommand{\poly}{\mathrm{poly}}
\newcommand{\polylog}{\mathrm{polylog}}
\newcommand{\supp}{\mathrm{supp}}
\newtheorem{theorem}{Theorem}[section]
\newtheorem{lemma}[theorem]{Lemma}
\newtheorem{proposition}[theorem]{Proposition}
\newtheorem{problem}{Open Problem}
\theoremstyle{remark}
\newtheorem*{remark}{Remark}
\theoremstyle{definition}
\newtheorem{definition}{Definition}
\title{Randomized vs. Deterministic Separation in Time-Space Tradeoffs of Multi-Output Functions}
\author{Huacheng Yu\thanks{Supported by Simons Junior Faculty Award - AWD1007164.}\\ Princeton University \\ \texttt{yuhch123@gmail.com}
\and Wei Zhan\thanks{Supported by a Simons Investigator Award and by the National Science Foundation grant No. CCF-2007462.}\\ Princeton University \\ \texttt{weizhan@cs.princeton.edu}}
\date{}
\begin{document}
\maketitle
	
	\begin{abstract}
		We prove the first polynomial separation between randomized and deterministic time-space tradeoffs of multi-output functions. In particular, we present a total function that on the input of $n$ elements in $[n]$, outputs $O(n)$ elements, such that:
		\begin{itemize}
			\item There exists a randomized oblivious algorithm with space $O(\log n)$, time $O(n\log n)$ and one-way access to randomness, that computes the function with probability $1-O(1/n)$;
			\item Any deterministic oblivious branching program with space $S$ and time $T$ that computes the function must satisfy $T^2S\geq\Omega(n^{2.5}/\log n)$.
		\end{itemize}
		This implies that logspace randomized algorithms for multi-output functions cannot be black-box derandomized without an $\widetilde{\Omega}(n^{1/4})$ overhead in time.
		
		Since previously all the polynomial time-space tradeoffs of multi-output functions are proved via the Borodin-Cook method, which is a probabilistic method that inherently gives the same lower bound for randomized and deterministic branching programs, our lower bound proof is intrinsically different from previous works.
		
		We also examine other natural candidates for proving such separations, and show that any polynomial separation for these problems would resolve the long-standing open problem of proving $n^{1+\Omega(1)}$ time lower bound for decision problems with $\polylog(n)$ space.
	\end{abstract}
	
	\section{Introduction}
	
	Time-space tradeoff is the phenomenon in computation where one could trade time for space by recomputing the intermediate results instead of storing them. Apart from being ubiquitous, time-space lower bounds also help us understand what can or cannot be efficiently computed with limited memory. However, lower bounds for decision problems turned out difficult to prove, and it remains a major open problem to prove lower bounds polynomially better than trivial in the general non-uniform model of sequential computation, i.e. branching programs.
	
	By contrast, there is an abundance of lower bounds against functions that have multiple bits of output (polynomial in the length of the input). The study of time-space tradeoffs of multi-output functions started with Borodin and Cook \cite{BC82} who showed that sorting $n$ numbers within space $S$ and time $T$ requires $TS=\Omega(n^2)$. Their problem formulation and proof was later refined by Beame \cite{Bea91}, who also showed a similar lower bound for listing unique elements. Tight time-space lower bounds was also proved for a variety of multi-output problems, including algebraic problems like matrix multiplication and inversion \cite{Abr91}, frequency moments over sliding windows \cite{BCM13}, and more recently, the memory game \cite{chakrabarti2017time}, printing and counting \textsc{SAT} assignments \cite{MW18} and multiple collision finding \cite{Dinur20}, just to name a few.
	
	Despite being problems from different backgrounds and of different nature, the proofs of their lower bounds are all direct applications of the original methods of Borodin and Cook \cite{BC82}. A formal (but restrictive for certain applications) and detailed description of the Borodin-Cook method was given in \cite[Section 10.11]{Sav98}, and here we present a brief framework of the method:
	
	\begin{enumerate}
		
		\item Fix a distribution $\mathcal{D}$ over the inputs (often uniform), and find $a(S)$ such that given $a(S)$ bits in the input, only $S$ bits of output are revealed on average. 
		
		\item Prove that for some large $c>1$, given any decision tree of depth $a(S)$ and $c\cdot S$ bits of output assigned to each path in the tree, these outputs are correct with probability $2^{-\Omega(S)}$ under $\mathcal{D}$. This is usually the technical part of the proof. 
		
		\item Now split the branching program into stages of length $a(S)$, and by a union bound over the $2^S$ starting nodes of each stage, the above argument shows that most inputs under $\mathcal{D}$ cannot generate $c\cdot S$ bits output within a stage. This implies a lower bound of the form $ST/a(S)\geq m$ where $m=\poly(n)$ is the number of output bits.
	
	\end{enumerate}

	Note that here we state the method for proving lower bounds against deterministic branching programs. When the distribution $\mathcal{D}$ above is uniform, the method can be stated even more simply as a counting argument over the set of inputs, and the lower bounds hold for average-case complexity. But no matter how $\mathcal{D}$ is chosen, by Yao's Minimax Principle the above arguments actually provide lower bounds against distributions over deterministic branching programs, the most general model of randomized sequential computation. In other words, the Borodin-Cook method \emph{inherently} give the same lower bound for deterministic and randomized computation. As all the previous lower bounds are proved using the Borodin-Cook method, the following question remains unanswered before this work:
	
	\begin{quote}\emph{
		Is there a polynomial separation between randomized and deterministic branching programs for time-space tradeoffs of multi-output functions?}
	\end{quote}

	Here we answer this question in the affirmative for oblivious branching programs, where the queries made to the input in the branching programs are independent of the input. We design a problem called \textsc{$(n,p)$-Non-Occurring Elements},  which can be solved efficiently by randomized oblivious algorithms, while any deterministic oblivious branching program solving the problem requires either polynomially larger space or polynomially longer time. Our problem has the additional advantage that it is a total function, which prevents the trivial separation where a randomized algorithm may have a sublinear running time (see our discussion in \Cref{sec:rel}).
	
	\begin{definition}
		Let $n>1$ and $p$ be a prime factor of $n$. In the \textsc{$(n,p)$-Non-Occurring Elements} (\textsc{$(n,p)$-NOE} for short) problem, the input is an unordered list of $n$ numbers $X=(x_1,\ldots,x_n)\in[n]^n$. The output is a set $Y\subseteq[n]$ such that:
		\begin{itemize}
			\item  If for every $c\in[n]$, the number of times that $c$ occurs in $X$ is a multiple of $p$ ($0$ included, so there are at most $n/p$ distinct \emph{occurring} elements), then $Y$ consists of the (at least $n-n/p$) elements in $[n]$ that \emph{do not occur} in $X$;
			\item Otherwise $Y=\varnothing$.
		\end{itemize}
	\end{definition}
	
	\begin{theorem}\label{thm:main}
		There is a randomized oblivious branching program with space $O(\log n)$ and time $\max\{1,n/p^2\}\cdot O(n\log n)$, that computes \textsc{$(n,p)$-NOE} with probability at least $1-2/n$. Moreover, the algorithm can be implemented with one-way access to random bits. On the other hand, any deterministic oblivious branching program with space $S$ and time $T$ that correctly computes \textsc{$(n,p)$-NOE} must satisfy $T^2(S+\log T)\geq \Omega(n^3/p)$.
	\end{theorem}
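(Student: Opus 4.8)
The plan is to prove the two halves separately. The randomized upper bound is a short multi-pass streaming construction built from fingerprinting; the deterministic lower bound is the heart of the matter, and --- as the introduction stresses --- it must be established by an argument that is \emph{not} an instance of the Borodin--Cook method, since that method is a minimax argument and would yield the identical bound against randomized programs, contradicting the first half.

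For the randomized algorithm I would combine a fingerprint test of the promise with a sampling-and-recursion scheme, using only an $O(\log n)$-bit seed read once off the one-way random tape. From the seed, draw a uniform $r$ in an extension field $\mathbb{F}$ of $\mathbb{F}_p$ with $|\mathbb{F}|\ge n^2$, and a (limited-independence) random sequence $\Sigma$ of $\widetilde{O}(n/p)$ positions of $[n]$. First, in one pass over $X$ compute $\sum_i r^{x_i}=\sum_{c} N_c r^{c}$, where $N_c$ is the multiplicity of $c$; over $\mathbb{F}_p$ the polynomial $\sum_c N_c z^{c}$ is identically zero exactly when every $N_c$ is divisible by $p$, so $\sum_i r^{x_i}$ is $0$ in that case and is nonzero with probability $1-O(n/|\mathbb{F}|)$ otherwise, and a nonzero value lets us output $\varnothing$ and halt. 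Otherwise traverse the binary tree of dyadic subintervals of $[n]$ --- with $O(\log n)$ memory, since a node is named by its depth and index --- and at each interval $b$ compute the number of $i\in\Sigma$ with $x_i\in b$: if this is $0$, output all of $b$ and prune the subtree; otherwise recurse into the two halves, stopping at singletons. The point of the analysis is that every occurring value has at least $p$ occurrences, so with $|\Sigma|=\widetilde{\Theta}(n/p)$ every interval containing an occurring value is flagged with probability $1-1/\poly(n)$; a union bound over the $O(n)$ dyadic intervals shows the emitted intervals are exactly the non-occurring values, with total failure probability at most $2/n$. The running time is the fingerprint pass, plus $O((n/p)\log p)$ visited intervals each costing $|\Sigma|=\widetilde{O}(n/p)$ input reads, plus writing the $O(n)$ output elements --- that is, $\max\{1,n/p^2\}\cdot\widetilde{O}(n)$ --- and the memory stays $O(\log n)$.

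For the lower bound the plan rests on two structural observations. First, computing \textsc{$(n,p)$-NOE} forces the program to pin down the set of at most $n/p$ occurring values, the complement of the required output. Second --- and this is where determinism enters --- with a fixed oblivious query schedule a deterministic program cannot certify that a given value is non-occurring until fewer than $p$ positions remain unqueried, since otherwise an adversary could hide $p$ copies of that value among the unqueried positions; a randomized program instead certifies all values simultaneously by random sampling, which is precisely why Borodin--Cook gives essentially nothing here (when $p$ is large almost all values are non-occurring, and that method cannot preclude simply committing to values not yet seen). Hence every one of the $n-n/p$ output elements is emitted very late in the computation. I would then cut the time axis into blocks so that the computation factors through checkpoints carrying only $S+O(\log T)$ bits (the work memory plus a counter for the output position), choose a hard input distribution, and argue that on it a block can commit to correctly non-occurring values only through what its incoming checkpoint has already verified --- a single block cannot verify non-occurrence from within, and the checkpoint is far too narrow to carry the status of more than a few values or intervals at once. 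A charging argument balancing the total number of queries against the total checkpoint capacity should then give $T^2(S+\log T)\ge\widetilde{\Omega}(n^3/p)$.

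The step I expect to be the main obstacle is this last charging. It must be tight enough to produce the quadratic factor $T^2$ rather than merely $T$ --- a naive entropy count of the checkpoints only yields $T=\widetilde{\Omega}(n/p)$ --- and it must rest genuinely on the program being deterministic and oblivious, so that it provably does not apply to the randomized algorithm of the first half. Turning the informal principle ``non-occurrence cannot be certified cheaply, so large parts of the input must be re-examined many times'' into a rigorous statement that holds against every oblivious query schedule and every decomposition of the computation into blocks is where the real difficulty lies.
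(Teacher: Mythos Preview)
Your upper bound sketch is close in spirit to the paper's, and the fingerprint test over an extension of $\mathbb{F}_p$ is essentially equivalent to the paper's Reed--Muller sketch. The dyadic-interval enumeration is a different (and perfectly reasonable) way to list the non-occurring elements; the paper instead walks through the occurring values in increasing order, resampling a fresh set of $\widetilde{O}(n/p)$ positions in each of $n/p$ rounds. One caution: your scheme reuses a \emph{single} sample $\Sigma$ across all intervals, so with one-way randomness you must regenerate $\Sigma$ from a stored seed; getting the failure probability down to $1/\poly(n)$ from an $O(\log n)$-bit seed via ``limited independence'' is not immediate (pairwise independence gives only $O(1/\log n)$ per occurring value), whereas the paper's fresh-sample-per-round design sidesteps this entirely.

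The lower bound is where there is a genuine gap. You correctly identify that the argument must exploit determinism in a way Borodin--Cook cannot, and you gesture at ``choose a hard input distribution'' and a charging argument---but you never say \emph{how} the distribution is chosen, and that is the whole point. The paper's key move is this: \emph{after} seeing the oblivious query schedule, one builds a partition $\mathcal{P}$ of $[n]$ into $n/p$ blocks of size $p$ with the property that for \emph{every} stage of length $n/2$, at least $r=\Theta(n^2/(Tp))$ of the blocks lie entirely outside that stage's query set. The hard distribution then assigns each block an independent uniform value in $[n]$. In any stage, conditioned on that stage's query answers, those $r$ blocks are still uniform; a union bound over the $2^S$ possible start states shows that $m=\Theta(n^2/T)$ claimed non-occurring outputs are simultaneously avoided by all $r$ blocks with probability at most $2^S(1-m/n)^r$, and comparing this to $1/\ell\ge 1/T$ yields $T^2(S+\log T)\ge\Omega(n^3/p)$ directly---no charging, no ``remaining unqueried positions'' count.

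Your intuition ``a value cannot be certified non-occurring until fewer than $p$ positions remain unqueried globally'' is not the right lever: it would at best give $T=\Omega(n)$, and it conflates the global query history with what a single short stage sees. The quadratic $T^2$ comes precisely from the \emph{per-stage} guarantee that $r=\Theta(n^2/(Tp))$ size-$p$ blocks are invisible to that stage, which in turn is engineered by building $\mathcal{P}$ adversarially against the fixed schedule. Without that adversarial construction of $\mathcal{P}$, the argument cannot get off the ground, and indeed a schedule-independent distribution would put you back in Borodin--Cook territory.
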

	\Cref{thm:main} will be proved in \Cref{sect:sep}. Taking $n=p^2$, we get a polynomial separation with randomized upper bound $S=O(\log n), T=O(n\log n)$ and deterministic lower bound $T^2S\geq \widetilde{\Omega}(n^{2.5})$.
	
	\begin{remark}
	The \textsc{$(n,p)$-NOE} problem could be perceived as a ``promised'' version of the \textsc{Non-Occurring Elements} problem (in which the output at all times consists of the elements not occurring in $X$), and the latter problem has time-space tradeoff $TS=\Theta(n^2)$ for both deterministic and randomized branching programs \cite{MW18}. The promise that every elements occurs a multiple of $p$ times can be efficiently checked with randomness (\Cref{lemma:promise}), however there may as well be a deterministic algorithm that verifies the promise in almost-linear time and poly-log space (subject to \Cref{openproblem} below). The above facts imply that neither the \textsc{Non-Occurring Elements} problem nor the promise itself could demonstrate the desired separation.
	\end{remark}
	
	\subsection{Separations with Implications on Decision Problems}
	
	We shall stress that our proof of \Cref{thm:main} is not technically hard. Indeed, to bypass the inherent disadvantage of the Borodin-Cook method, the key in our lower bound proof of \Cref{thm:main} is to choose \emph{adversarially} a distribution $\mathcal{D}$ that \emph{depends on} the deterministic structure of the branching program instead of fixing a distribution $\mathcal{D}$ in advance. Arguably, the difficulty in proving a separation lies mostly in finding a proper total function where the adversarial method works. We demonstrate this difficulty by showing that, for several natural candidate problems whose best known deterministic algorithms are polynomially worse than randomized algorithms, proving a polynomial separation will lead to the resolution of the following open problem:
	
	\begin{problem}\label{openproblem}
		Find an explicit family of decision problems $F:\{0,1\}^n\rightarrow\{0,1\}$, such that any deterministic branching program with space $S\leq\polylog(n)$ that computes $F$ requires time $T=n^{1+\Omega(1)}$.
	\end{problem}

	As we mentioned at the beginning of this paper, \Cref{openproblem} is a long-standing and notoriously hard problem. Even against oblivious branching programs, the best time-space lower bound is still $T=\Omega(n\log^2(n/S))$ proved by Babai, Nisan and Szegedy \cite{BNS92} three decades ago.
	
	For a concrete example of our implication results, one of the candidate functions that we study can be very succinctly described as follows:
	\begin{definition}
		In the \textsc{2-StepPointerChasing} (\textsc{2-PC} for short) problem, the input is a function $f:[n]\rightarrow[n]$, and the output consists of $(x,f(f(x)))$ for all $x\in[n]$.
	\end{definition}
	The \textsc{2-PC} problem exhibits interesting phase transitions in time-space tradeoffs with different computation models. With non-oblivious queries, \textsc{2-PC} can be solved deterministically in space $O(\log n)$ and time $O(n)$. On randomized oblivious branching programs, \textsc{2-PC} obliges to the tradeoff $T^2S=\widetilde{\Theta}(n^3)$. We conjecture that randomness is required for this tradeoff, and on deterministic oblivious branching programs the lower bound $TS\geq\widetilde{\Omega}(n^2)$ holds. However, it turns out that proving such a separation (in fact any lower bound polynomially better than $T^2S=\Omega(n^3)$) is at least as hard as answering \Cref{openproblem}. We will show this in \Cref{thm:2pc}, by relating \textsc{2-PC} to a matching problem on explicit bipartite expanders.
	
	Another example is the \textsc{SetIntersection} problem (given two sets $A$ and $B$, output elements in $A\cap B$). The optimal randomized algorithm for \textsc{SetIntersection} is based on the small-space collision finding algorithm \cite{BCM13} on random (or pseudo-random \cite{CJWW22,LZ23}) hash functions, with time-space tradeoff $T^2S=\widetilde{O}(n^3)$. The randomness seems essential in the original algorithm and its pseudo-random improvements; however, in \Cref{thm:ed} we give a black-box reduction from \textsc{SetIntersection} to the well-studied decision problem of \textsc{ElementDistinctness}, which shows that proving a polynomial separation for \textsc{SetIntersection} would answer \Cref{openproblem} on \textsc{ElementDistinctness}.
	
	\subsection{Related Works}\label{sec:rel}
	
	\paragraph{Query Complexity.}
	In essence, time-space tradeoff in non-uniform models is a study on query complexity with bounded memory. There are extensive results on separations for query complexity in different models, and readers can refer to \cite{ABKRT21} for a comprehensive lists of separations and relations between query complexities and other complexity measures on total functions.
	
	Our result is in a parallel world to these separations: The separations in query complexity are based on \emph{sub-linear} query algorithms, while we demonstrate our separation with \emph{super-linear} lower bounds. In fact, for oblivious queries, there is no separation in query complexity: It was shown in \cite{Montanaro10} that no matter deterministic, randomized or even quantum, the oblivious (or non-adaptive) query complexity for \emph{any} total boolean function that depends on $n$ variables is always $\Omega(n)$. This is the major reason that we focus on total functions, so that our separation is substantial in the space-bounded setting.
	
	\paragraph{Superfast Derandomization.}
	With the common belief that $\mathsf{BPP}=\mathsf{P}$ and $\mathsf{BPL}=\mathsf{L}$ and the long line of literature under the ``Hardness vs. Randomness'' paradigm, several recent works take interest in the question of how fast derandomization could be. In particular, Chen and Tell \cite{CT21} showed that under standard hardness assumption, linear overhead in the running time is possible, which is optimal assuming \texttt{\#NSETH}.
	
	In the space-bounded setting, Hoza \cite{Hoza19} proved that any decision problem with space $S$, time $n\cdot\poly(S)$ and one-way access to randomness can be solved with the same space and time bound and only $O(S)$ random bits. It is not clear whether a similar result holds for full derandomization even for $S=O(\log n)$, while our \Cref{thm:main} can be viewed as an impossibility result for multi-output functions, that a logspace randomized algorithm cannot be black-box derandomized (so that it keeps the oblivious query pattern) without an $\widetilde{\Omega}(n^{1/4})$ time overhead. We conjecture that the separation holds for general non-oblivious branching programs, so that the black-box assumption could be removed, and that the separation can be improved to show an \emph{unconditional} $\widetilde{\Omega}(n)$ overhead.
	
	\paragraph{Derandomizing Element Distinctness.}
	In \cite{BCM13}, Beame, Clifford and Machmouchi presented an randomized algorithm that solves \textsc{ElementDistinctness} (\textsc{ED} for short, that decides whether $n$ input elements are all distinct) with tradeoff $T^2S=\widetilde{O}(n^3)$, which is strictly better than sorting. The algorithm requires access to a large random hash function that does not count towards the space usage. Recent works \cite{CJWW22,LZ23} managed to bring down the seed length of the hash function to $O(\log^3 n)$ and thus acquiring an algorithm with the same tradeoff but also one-way access to random bits. They raised the question of how small the random seed length could be. At an extreme, one may wonder if the algorithm can be even fully derandomized, so that \textsc{ED} is solved deterministically with the same tradeoff. 
	
	This question cannot be answered in negative without answering \Cref{openproblem}. However, there are numerous multi-output functions, such as \textsc{SetIntersection}, that uses the same algorithm to achieve the $T^2S=\widetilde{O}(n^3)$ tradeoff, and proving stronger deterministic lower bounds would also imply that the algorithm of \cite{BCM13} cannot be fully derandomized without polynomial overhead. Yet in \Cref{sect:ed}, we show that for all such problems for which we have a tight randomized lower bound, this route is also impossible without answering \Cref{openproblem}.
	
	\paragraph{Quantum Time-Space Tradeoffs}
	Finally, we would like to mention that the situation of time-space tradeoffs of multi-output functions is entirely different in quantum computing. Klauck \cite{Klauck03} presented a quantum algorithm for sorting $n$ numbers with $T^2S=\widetilde{O}(n^3)$, which already provides a polynomial separation between quantum and randomized time-space tradeoff compared to the classical lower bound in \cite{BC82,Bea91}. The separation is extra strong in the sense that the quantum algorithm uses only $\polylog(n)$ quantum memory.
	
	On the other hand, lower bounds for quantum time-space tradeoffs are more scarce and the proofs for Step 2 in the quantum analogy of the Borodin-Cook methods are more ad-hoc. Currently, only two proof methods are known for quantum lower bounds: via direct-product theorems \cite{KSdW07} and via the recording query technique \cite{HM21}.
	
	\section{Preliminaries}
	
	We use $[n]$ to denote the set ${1,2,\ldots,n}$. We use asymptotic notations $\widetilde{O}$ and $\widetilde{\Omega}$ to hide poly-logarithmic factors in $O(\cdot)$ and $\Omega(\cdot)$. Since throughout this paper, the input size of interest is always polynomial in $n$, they always hide factors of $\polylog(n)$ regardless of the content in the parenthesis: For instance, $\widetilde{O}(1)$ in this paper always stands for $O(\polylog(n))$.
	
	\paragraph{Branching Programs.}
	The computation models we consider in this paper are branching programs, which are general enough to model any computation with a proper notion of time and space. However, most of our upper bound results are uniform and can be implemented on more restrictive models such as RAMs.
	
	A deterministic branching program is a layered DAG, with a unique initial vertex. When the inputs $(x_1,\ldots,x_n)$ of the problem are from the domain $D$, each vertex not in the last layer is labeled with $i\in[n]$, and has $|D|$ edges going out towards the next layer labeled with elements in $D$. The \emph{computation path} on input $(x_1,\ldots,x_n)$ is the unique path that starts from the initial vertex, on each vertex labeled with $i$ queries $x_i$, and then follows the outgoing edge labeled with the value of $x_i$, until reaching the last layer. We say that a branching program is with \emph{space} $S$ and \emph{time} $T$ if it consists of at most $T+1$ layers, and each contains at most $2^S$ vertices. The branching program is \emph{oblivious}, if within each layer the labels on all the vertices are the same.
	
	A randomized (oblivious) branching program with space $S$ and time $T$ is a distribution over deterministic (oblivious) branching programs with the same space and time bound. We say that a randomized branching program has \emph{one-way access} to random bits, if in each layer the random labels on the vertices and outgoing edges are independent of the rest of the branching program.
	
	\paragraph{Multi-Output Functions.}
	For computing a decision problem, each node in the last layer of the branching program is labeled with $0$ or $1$, and the output is the label on the final node of the computation path. For computing a multi-output function $F:D^n\rightarrow R^m$ with the range set $R$, however, we allow outputting during the computation path: Each edge in the branching program is allowed to output arbitrarily many output statements $(j,y_j)\in[m]\times R$ that claims $F(x)_j=y_j$.  The final output is the vector in $R^m$ decided by the collection of output statements on the computation path, which must be complete and consistent. In many cases, the multi-output function computes a subset of $R$ instead of a vector. In such cases, the output statements are simply elements $y\in R$, and the final output is the collection of these elements.
	
	Finally, we note that as $|D|,|R|$ and $m$ are all $\poly(n)$ in this paper, all the problems we consider can be converted into the binary domain and binary range, with only $\polylog(n)$ overhead in time and space for the branching programs computing them, which are ignored as we are focused on polynomial separations.
	
	\section{Polynomial Separation for Oblivious Computation}\label{sect:sep}
	
	In this section we prove \Cref{thm:main}. The randomized upper bound is proved in \Cref{sect:rand} and the deterministic lower bound is proved in \Cref{sect:detm}.
	
	\subsection{Randomized Oblivious Upper Bound}\label{sect:rand}
		
	\begin{lemma}\label{lemma:promise}
		There is a randomized algorithm using $O(\log n)$ space and $O(\log n)$ random bits that reads $X=(x_1,\ldots,x_n)\in[n]^n$ as a one-pass stream and satisfies that:
		\begin{itemize}
			\item If every $c\in[n]$ occurs in $X$ a multiple of $p$ times, the algorithm always accepts;
			\item Otherwise, the algorithm rejects with probability at least $1-2p^{-1/2}\log n$.
		\end{itemize}
	\end{lemma}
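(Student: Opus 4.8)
The plan is to use a single algebraic fingerprint that detects whether the multiplicity vector $(m_1,\ldots,m_n)$---where $m_c$ counts the occurrences of $c$ in $X$---is the all-zero vector modulo $p$, which is exactly the promise in question. The key idea is to carry out all arithmetic in a finite field of characteristic $p$, so that occurrences arriving in batches of $p$ cancel automatically. Fix $k$ to be the least integer with $p^k\ge n^2$, set $q=p^k$ (so $n<q<n^3$, hence $\log q=O(\log n)$), and represent $\mathbb{F}_q$ by a fixed irreducible degree-$k$ polynomial over $\mathbb{F}_p$, which can be found once in $\poly(n)$ preprocessing without randomness. The algorithm samples one uniform $r\in\mathbb{F}_q$---this is $\lceil\log q\rceil=O(\log n)$ random bits, consulted once at the start---then reads the stream in a single pass maintaining the running value $s\leftarrow s+r^{x_i}\in\mathbb{F}_q$, and accepts if and only if $s=0$ at the end.

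For correctness, at the end $s=\sum_{i=1}^n r^{x_i}=\sum_{c=1}^n m_c\cdot r^c$ in $\mathbb{F}_q$, with each integer coefficient $m_c$ read modulo $p$. If every $m_c$ is divisible by $p$, every term is $0$, so $s=0$ and the algorithm always accepts. If some $c_0$ occurs $m_{c_0}\not\equiv 0\pmod p$ times, then $P(t):=\sum_{c=1}^n (m_c\bmod p)\,t^c\in\mathbb{F}_q[t]$ is a nonzero polynomial of degree at most $n$ (its $t^{c_0}$-coefficient is nonzero), so it has at most $n$ roots in $\mathbb{F}_q$, and the algorithm therefore wrongly accepts with probability $\Pr_r[P(r)=0]\le n/q\le 1/n$. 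Since $p\le n$ gives $1/n\le 2p^{-1/2}\log n$, the rejection probability is at least $1-2p^{-1/2}\log n$, as claimed. (If one instead wishes to minimize the random bits, taking $q$ to be the least power of $p$ exceeding $n\sqrt p$ already yields failure probability at most $p^{-1/2}$, still with $\log q=O(\log n)$.)

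The space is $O(\log n)$: registers for $r$ and $s$, the hard-wired defining polynomial of $\mathbb{F}_q$, and an $O(\log n)$-bit position counter. The randomness is $O(\log n)$ bits touched only at the start, which in particular is one-way access, and the pass is genuinely one-pass, spending one field exponentiation and one field addition per element (near-linear running time). When this subroutine is run inside the branching program of \Cref{thm:main}, it queries positions $1,2,\ldots,n$ in order regardless of the input, so it is also oblivious.

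The one spot that needs care---and where I expect a proof to be easiest to get wrong---is the choice of the ambient ring. The root-counting step is vacuous unless $q>n$, so the field must be a polynomial factor larger than $n$; and the completeness step genuinely needs the ring to be an integral domain of characteristic exactly $p$. Over $\mathbb{Z}/p^k\mathbb{Z}$ or $\mathbb{F}_p[t]/(t^k)$ a nonzero degree-$n$ polynomial can vanish at far more than $n$ points, breaking soundness; over a field of characteristic other than $p$ the batches of $p$ equal elements would not cancel, breaking completeness. The extension field $\mathbb{F}_{p^k}$ with $p^k$ a suitably large polynomial in $n$ is exactly what reconciles both constraints while keeping space and randomness logarithmic.
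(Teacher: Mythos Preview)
Your proof is correct, and it takes a genuinely different route from the paper's. The paper also reduces to polynomial identity testing over characteristic $p$, but it stays inside $\mathbb{F}_p$: it draws $m=2\log n/\log p$ independent seeds $\beta_1,\ldots,\beta_m\in\mathbb{F}_p$, sets $\alpha_c=\beta_1^{d_1}\cdots\beta_m^{d_m}$ where $(d_1,\ldots,d_m)$ is the base-$d$ expansion of $c-1$ with $d=\sqrt{p}$, and accumulates $\sum_i \alpha_{x_i}$. The resulting sum is an $m$-variate polynomial of total degree $md$ over $\mathbb{F}_p$ whose coefficients are the multiplicities, and Schwartz--Zippel gives failure probability $md/p\le 2p^{-1/2}\log n$. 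You instead pass to a single extension field $\mathbb{F}_{p^k}$ with $p^k\ge n^2$ and evaluate the univariate polynomial $\sum_c m_c\,t^c$ at one random point, invoking only the elementary root bound. Your argument is cleaner and actually yields a stronger error bound ($1/n$ rather than $2p^{-1/2}\log n$), at the cost of needing a representation of $\mathbb{F}_{p^k}$ (which, as you note, is harmless in this non-uniform model). Both approaches use the same $O(\log n)$ budget of space and randomness, so neither dominates on resources; yours is simply a more direct instantiation of the same fingerprinting idea.
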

	\begin{proof}
		The algorithm maintains a linear sketch of the frequencies of elements in $[n]$. Specifically, let $\alpha_1,\ldots,\alpha_n$ be uniformly and independently drawn from $\mathbb{F}_p$. The algorithm computes $\sum_i \alpha_{x_i}$ and accepts if the sum equals $0$. If some $c\in[n]$ occurs not a multiple of $p$ times, the factor before $\alpha_c$ in the sum is non-zero, and the sum equals $0$ with probability $1/p$.
		
		To reduce the random bit usage (the naive approach uses $n\log p$ random bits) we use Reed-Muller codes. Instead of drawing $\alpha_1,\ldots,\alpha_n$ independently, the algorithm draws $\beta_1,\ldots,\beta_m\in\mathbb{F}_p$ uniformly and independently, and let $\alpha_1,\ldots,\alpha_n$ be the values of monomials
		\[\beta_1^{d_1}\beta_2^{d_2}\cdots\beta_m^{d_m},\quad 0\leq d_1,\ldots,d_m<d.\]
		By taking $d=p^{1/2}$ and $m=2\log n/\log p$, the number of such monomials is at least $d^m\geq n$. Since $m\log p=O(\log n)$, the algorithm can draw and store $\beta_1,\ldots,\beta_m$ directly. After reading $x_i=c\in[n]$, $(c-1)$ is decomposed in base $d$ to obtain $d_1,\ldots,d_m$ in sequence, while the algorithm computes $\alpha_c=\beta_1^{d_1}\beta_2^{d_2}\cdots\beta_m^{d_m}$ and accumulates it to the sum $\sum_i \alpha_{x_i}$.
		
		Now the sum $\sum_i \alpha_{x_i}$ is a total degree $md$ polynomial in $\mathbb{F}_p$ on variables $\beta_1,\ldots,\beta_m$, where the the coefficients are the frequencies of elements in $[n]$ occurring in $X$. If every $c\in[n]$ occurs in $X$ a multiple of $p$ times, the polynomial is always zero; Otherwise, the polynomial is non-zero, and by the Schwartz-Zippel Lemma, the probability that the polynomial evaluates to zero is at most $md/p\leq 2p^{-1/2}\log n$.
	\end{proof}
	
	\begin{lemma}\label{lemma:noe}
		Suppose $X=(x_1,\ldots,x_n)\in[n]^n$ satisfies that every $c\in[n]$ occurs in $X$ either $0$ or at least $p$ times. Then there is a randomized oblivious algorithm using $O(\log n)$ space and $O(n^2p^{-2}\log n)$ time, with one-way access to random bits, that solves \textsc{Non-Occurring Elements} on $X$ with probability at least $1-1/n$.
	\end{lemma}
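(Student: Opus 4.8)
The first thing to notice is that the promise only bounds the number of \emph{distinct} occurring elements by $r:=\lfloor n/p\rfloor$, which is already far too many to hold in $O(\log n)$ bits; so the plan cannot be to learn the occurring set and then print its complement. Instead I would enumerate the occurring elements in increasing order, discovering exactly one per pass over $X$, and emit each maximal block of consecutive non-occurring elements the moment it has been bracketed from both sides. The feature that makes this fast for large $p$ is that, by the promise, every occurring element sits in at least $p$ of the $n$ positions, so a uniformly random multiset of only $m:=\min\{n,\lceil 3(n/p)\ln n\rceil\}$ positions of $X$ fails to hit a fixed occurring element with probability at most $(1-p/n)^m\le e^{-mp/n}\le n^{-3}$ (and hits everything when $m=n$). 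Thus a single sparse sample already exposes all occurring elements with high probability.

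Concretely, I would keep a threshold $v\in\{0,1,\dots,n,n+1\}$, initialized to $0$ and maintained equal to the largest occurring element found so far, and run $r+1$ rounds. In round $i$ the algorithm draws $m$ fresh independent uniform indices of $[n]$ (one-way randomness, $O(\log n)$ bits per index), queries those positions in this fixed order while tracking the smallest queried value that exceeds $v$, and sets $v'$ to that value if one exists and to $n+1$ otherwise; it then outputs every element of $\{v+1,\dots,v'-1\}$ as non-occurring and updates $v\gets v'$. The sequence of queried positions depends only on the random bits, so each deterministic program in the support is oblivious; the working memory is just $v$, the running minimum, a round counter and a within-round counter, i.e.\ $O(\log n)$ bits; and since $r+1\le n/p+1$ and $m\le 3(n/p)\ln n+1$, the number of queries is $(r+1)m=O(n^2p^{-2}\log n)$. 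Each output block is produced by a single edge, so it is free in the branching-program model (and costs an extra $O(n)$ in total on a RAM, which is absorbed into the bounds actually used for \Cref{thm:main}).

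For correctness, condition on the event that every one of the $r+1$ samples contains all occurring elements; by a union bound over at most $r+1$ rounds and at most $r$ occurring elements this event fails with probability at most $(r+1)r\cdot n^{-3}\le 1/n$. On this event, the value $v'$ computed in round $i$ is exactly the smallest occurring element larger than the current threshold, or $n+1$ once the threshold has overtaken all occurring elements; hence the successive thresholds list the occurring elements $v_1<v_2<\cdots$ in order, and the blocks emitted between consecutive thresholds, together with $\{1,\dots,v_1-1\}$ and the tail above the largest occurring element, are exactly $[n]$ minus the occurring set --- the required output. The only genuine computation is the one-line sampling estimate above; the step I expect to need the most care is the interaction with one-way randomness, since a lucky sample cannot be reused across rounds, forcing a fresh sample each round and a union bound over all of them --- which is precisely why both $r$ and $m$, and hence a product of order $(n/p)\cdot (n/p)\log n$, show up in the running time.
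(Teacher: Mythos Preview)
Your proposal is correct and follows essentially the same approach as the paper: sample a random multiset of roughly $3(n/p)\ln n$ indices in each of $\lfloor n/p\rfloor{+}1$ rounds, maintain a threshold that advances to the next occurring element, and output the gap after each round, with a union bound over rounds and occurring elements giving the $1-1/n$ success probability. The only (harmless) wrinkle is your cap $m=\min\{n,\dots\}$ together with the parenthetical ``hits everything when $m=n$'': a \emph{random} multiset of size $n$ need not hit every occurring element when $p$ is small, so in that regime you should either drop the cap (as the paper does) or switch to reading all $n$ positions deterministically---but this does not affect the argument or the parameter range used in \Cref{thm:main}.
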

	\begin{proof}
		Let $R\subseteq[n]$ be a random multi-set of size $r=3np^{-1}\ln n$. As every occurring element occurs at least $p$ times, the probability that $\{x_i\mid i\in R\}$ does not contain all occurring elements in $X$ is at most
		\begin{equation}\label{eq:1}
			n\cdot(1-p/n)^r\leq n\cdot e^{-3\ln n}=n^{-2}.
		\end{equation}
		
		The algorithm goes for $n/p$ rounds, in each round independently samples such a multi-set $R$ of size $r$, and queries $x_i$ for $i\in R$. The algorithm also stores an number $j$, which is initialized as $0$, and in each round $j$ is updated to the next smallest occurring number
		\[j'=\min\ \{x_i>j\mid i\in R\}\cup\{n+1\}.\]
		At the end of each round, the algorithm outputs every number strictly between the pre-updated $j$ and $j'$. By \eqref{eq:1} and a union bound, with probability at least $1-1/n$, in every round $\{x_i\mid i\in R\}$ contains all occurring elements (where there are at most $n/p$ of them). In this case $j$ goes through all occurring elements in order, and thus the outputs are exactly the non-occurring ones.
		
		The overall time complexity is $rn/p=O(n^2p^{-2}\log n)$, and since elements in $R$ can be sampled sequentially to compute $j'$ and no need to be stored, the only space usage is for storing $j$ and $j'$ which is $O(\log n)$.
	\end{proof}

	Note that \Cref{lemma:promise} solves a decision problem and thus can be repeated for $O(\log n)$ times to amplify the success probability to $1-1/n$. Then combined with \Cref{lemma:noe}, we obtain the desired randomized oblivious upper bound of space $O(\log n)$ and time $O((1+n/p^2)\cdot n\log n)$.

	\subsection{Deterministic Oblivious Lower Bound}\label{sect:detm}
	\begin{lemma}
		Any deterministic oblivious branching program with space $S$ and time $T$ that correctly computes \textsc{$(n,p)$-NOE} must satisfy $T^2(S+\log T)\geq \Omega(n^3/p)$.
	\end{lemma}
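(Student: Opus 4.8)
The plan is to bypass the Borodin--Cook obstruction by choosing the hard distribution $\mathcal{D}$ \emph{adversarially against the program}. Since the program is oblivious, its query sequence $q_1,\dots,q_T\in[n]$ is fixed; after inspecting it the adversary partitions $[n]$ into $n/p$ \emph{blocks} of size $p$ (the partition chosen as discussed below), and then draws an input by assigning the blocks a uniformly random injection into $[n]$ and giving every position the value of its block. Every such input satisfies the promise and has exactly $n/p$ occurring elements, so its correct output is the complement of the injection's image: $n-n/p$ non-occurring values forming $\Theta(n/p)$ maximal runs of consecutive values, and producing the output amounts to recovering the random $(n/p)$-element set of occurring values.

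The first step observes that output happens only after full coverage. An output statement declaring $y\notin X$ on a computation path is valid only if \emph{every} input consistent with the queries made so far has $y$ in its output, hence satisfies the promise; but for $p\ge2$ no assignment to a proper subset of the positions forces the promise---given an unqueried position one can always complete the assignment so some value's multiplicity is not divisible by $p$. More strongly, the program can certify that a whole run of values is absent only if its current node's information forces that run's values to be missing from \emph{all} $n$ positions, so the program must have queried all $n$ positions before certifying \emph{any} run, and each run is genuinely ``charged'' $n$ queries of verification. By obliviousness this first full coverage finishes at a fixed time $t_0\ge n$, so $T\ge n$, and all $n-n/p$ outputs are made afterward.

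The second step is a Borodin--Cook--style accounting, but performed \emph{with the state tracked along the computation} and against the adversarial $\mathcal{D}$. Think of the verification each run needs---``this interval of values is absent from position $j$'' for every $j\in[n]$---as a set of $n$ atomic checks. A single query of $x_j$ simultaneously performs the check for every run the program is currently ``remembering'', but naming a run costs $\Theta(\log n)$ bits (its two endpoints), so a node with $S$ bits of memory can remember only $O((S+\log T)/\log n)$ runs at once, and each of the $T$ queries therefore performs at most $O((S+\log T)/\log n)$ useful checks. On the other hand, certifying all $\Theta(n/p)$ runs requires $\Theta(n/p)\cdot n$ useful checks in total, so $T\cdot O((S+\log T)/\log n)\ge\Omega(n^2/p)$, i.e.\ $T(S+\log T)\ge\Omega(n^2\log n /p)$; combining with $T\ge n$ gives $T^2(S+\log T)\ge\Omega(n^3/p)$. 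The $\log T$ enters because the relevant union bound is over the $\le(T+1)2^S$ nodes of the program rather than just over states.

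The step I expect to be the main obstacle is making the ``only $O((S+\log T)/\log n)$ useful checks per query'' claim rigorous. The difficulty is the same one that defeats the textbook Borodin--Cook argument here: by time $t_0$ the program has already read the entire input, so its state at any later point is a lossy but potentially very cleverly engineered digest of all of $x$, and one cannot treat it as uninformative; a naive ``store the smallest few occurring values'' digest already certifies $\Theta(S/\log n)$ runs at once, and one must rule out that anything substantially beats it. This is where the block partition is chosen as a function of $q_1,\dots,q_T$: one must argue, by a counting argument over the $\le 2^S$ state-classes simultaneously at all the relevant nodes, that against $\mathcal{D}$ no $S$-bit digest can ``know'' that more than $\widetilde O(S+\log T)$ runs are absent for a typical input---equivalently, that the adversarial partition is incompressible with respect to the program---and transferring this into the per-query accounting above is the technical heart of the proof.
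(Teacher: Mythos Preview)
Your high-level framework---choose an adversarial partition of $[n]$ into size-$p$ blocks depending on the oblivious query sequence, and give each block a random value---matches the paper. But the two concrete steps you propose do not constitute a proof, and the second one is not the route the paper takes.

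Your Step~1 (no output before the first full coverage, hence $T\ge n$) is correct, but it buys almost nothing: $T\ge n$ is trivial for any total function depending on all coordinates. The paper does not use this observation at all. You then need Step~2 to supply essentially the entire lower bound $T(S+\log T)\ge\Omega(n^2/p)$, and here there is a genuine gap. The ``useful checks'' accounting---each query certifies at most $O((S+\log T)/\log n)$ runs because naming a run costs $\Theta(\log n)$ bits---is not a rigorous argument: the program does not output runs, it outputs individual elements, and there is no well-defined notion of which runs a state is ``remembering'' that would let you charge queries this way. You acknowledge this is the obstacle and propose to resolve it by choosing the partition cleverly and doing a counting argument over the $\le(T{+}1)2^S$ nodes, but you never specify the partition or the counting argument. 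That specification \emph{is} the proof.

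The paper's argument is both simpler and structurally different. It divides time into $\ell=2T/n$ stages of $n/2$ queries each. The partition is then built greedily stage by stage: for each stage $k$, it adds $r=n^2/(4Tp)$ new size-$p$ blocks lying entirely \emph{outside} that stage's query set $Q_k$ (feasible because $|Q_k|\le n/2$ and at most $n/2$ positions are already in blocks). The distribution assigns each block an \emph{independent uniform} value in $[n]$---not an injection---so that conditioned on all answers inside stage $k$, the $r$ blocks placed against $Q_k$ are still independently uniform. Some stage must output $m=n/(2\ell)$ non-occurring elements with probability $\ge 1/\ell$; but over the $\le 2^S$ starting states, any fixed $m$-element output is contradicted by each of the $r$ fresh blocks independently with probability $m/n$, giving
\[
\Pr[\text{stage outputs }m\text{ correct elements}]\le 2^S\Bigl(1-\tfrac{m}{n}\Bigr)^r\le 2^S e^{-n^3/(16T^2p)},
\]
and comparing with $1/\ell\ge 1/T$ yields $T^2(S+\log T)\ge\Omega(n^3/p)$. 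Note that independence (rather than your injective choice) is what makes the product bound clean, and that the ``adversarial'' content is entirely in the stage-avoiding placement of blocks---there is no per-query accounting and no appeal to runs or incompressibility.
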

	\begin{proof}
		Divide the branching program into $\ell=2T/n$ stages, each of which contains $T/\ell=n/2$ queries. We first construct a partition $\mathcal{P}$ on $[n]$ that consists of $n/p$ parts of size $p$ as follows:
		\begin{enumerate}
			\item Initially, let $\mathcal{P}=\varnothing$.
			\item For each stage $k$ of the branching program, let $Q_k$ be the set of indices queried in the stage. Arbitrarily pick $r=n^2/(4Tp)$ disjoint sets of size $p$ outside $\bigcup \{P\in\mathcal{P}\}\cup Q_k$, and add them into $\mathcal{P}$.
			\item Finally after going through all the stages, arbitrarily partition the remaining elements in $[n]$ into sets of size $p$.
		\end{enumerate}
		Notice that during Step 2, the total number of elements in $\bigcup \{P\in\mathcal{P}\}$ never exceeds
		\[r\ell p=\frac{n^2}{4Tp}\cdot\frac{2T}{n}\cdot p= n/2.\] As $|Q_k|\leq n/2$, this implies that Step 2 is always feasible. 
		
		We define a distribution $\mathcal{D}$ of $X\in[n]^n$ as follows: For every part $P\in\mathcal{P}$, uniformly and independently pick $c\in[n]$ and let $x_i=c$ for all $i\in P$. Notice that the \textsc{$(n,p)$-NOE} problem is identical to \textsc{Non-Occurring Elements} on $\supp(\mathcal{D})$, i.e., every element occurs a multiple of $p$ times. Now consider the probability
		\begin{equation}\label{eq:pr}
			\Pr_{X\sim\mathcal{D}}[\textrm{On input $X$, at least $m$ distinct elements are outputted in stage $k$}].
		\end{equation}
		Since for each input $X\in\supp(\mathcal{D})$ there are at least $n-n/p\geq n/2$ non-occurring elements, there must exist a stage $k$ such that the above probability is at least $1/\ell$ for $m=n/(2\ell)$.
		
		On the other hand, Step 2 in the construction of $\mathcal{P}$ implies that, given the query answers in stage $k$ (i.e. $x_i$ for all $i\in Q_k$), there are at least $r$ parts in $\mathcal{P}$ whose values in $X$ are still uniformly random. When the query answers are given, there are at most $2^S$ different collections of outputs in stage $k$ (dictated by the starting state of the stage), and if $m$ distinct elements are outputted and thus non-occurring, each one of the $r$ parts is consistent with these outputs with probability $1-m/n$, as the elements in these parts should be not in the output. Therefore the probability in \eqref{eq:pr} is upper bounded by
		\[2^S\cdot \left(1-\frac{m}{n}\right)^r = 
		2^S\cdot \left(1-\frac{n}{4T}\right)^\frac{n^2}{4Tp}\leq 2^S\cdot e^{-\frac{n^3}{16T^2p}}.\]
		As the probability is at least $1/\ell\geq 1/T$, we have
		\[S-\frac{\log e\cdot n^3}{16T^2p}\geq -\log T \quad\Rightarrow\quad T^2(S+\log T)\geq \Omega(n^3/p). \qedhere\]
	\end{proof}

	\section{Separations that Imply Decision Lower Bounds}
	
	In this section we present several natural candidates of multi-output function for randomized vs. deterministic separations, and show that actually proving such separations will lead to answering \Cref{openproblem}. These results can be perceived in two ways: On one hand, these are currently barrier results implying that proving separations for these natural problems is difficult, which is where the \textsc{$(n,p)$-NOE} problem in our main result stands out; On the other hand, one may hope that future developments in proving lower bounds for multi-output functions will help towards the final resolution of \Cref{openproblem}.
	
	Before getting into the concrete examples, we note that every multi-output function $F:\{0,1\}^n\rightarrow\{0,1\}^m$ can be converted to a decision problem $F':\{0,1\}^n\times[m]\rightarrow\{0,1\}$ defined as $F'(x,i)=F(x)_i$. Therefore, if $F'$ can be computed in space $O(\log n)$ and time $\widetilde{O}(n)$, then $F$ can trivially be computed in space $O(\log n)$ and time $\widetilde{O}(mn)$. Our results in this section holds non-trivially with better time complexity than $\widetilde{O}(mn)$. However, this implication is still useful as it makes decision problems and single-output functions (whose outputs are in $[n]$, or generally have length $m=\polylog(n)$) morally equivalent with respect to \Cref{openproblem}: Any lower bound for a single-output function that is polynomially better than trivial implies a corresponding lower bound for a decision problem.
	
	\subsection{Pointer Chasing and Expanders}\label{sect:2pc}
	
	Recall the definition of the \textsc{2-StepPointerChasing} problem:
	\begin{definition}
		In the \textsc{2-StepPointerChasing} (\textsc{2-PC} for short) problem, the input is a function $f:[n]\rightarrow[n]$, and the output consists of $(x,f(f(x)))$ for all $x\in[n]$.
	\end{definition}

	For non-oblivious algorithms, \textsc{2-PC} can be easily solved in deterministic space $O(\log n)$ and time $O(n)$, by querying $f$ on each $x$ and adaptively on $f(x)$. On the other hand we have the following lower bound on oblivious algorithms for \textsc{2-PC}. The proof is a direct application of Borodin-Cook method on the uniform distribution, and thus omitted.
	\begin{proposition}\label{prop:2pc}
		Any randomized oblivious branching program with space $S$ and time $T$ that solves \textsc{2-PC} must satisfy $T^2S\geq \widetilde{\Omega}(n^3)$.
	\end{proposition}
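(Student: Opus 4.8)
The plan is to run the Borodin--Cook method against the uniform distribution $\mathcal{D}$ on inputs $f\in[n]^n$; by Yao's minimax principle it suffices to prove the bound for a \emph{deterministic} oblivious branching program of space $S$ and time $T$ that is correct on a $(1-o(1))$-fraction of $f\sim\mathcal{D}$. I would first note that $T\geq n$: if some coordinate $x_0$ is never queried then $f(x_0)$ is a uniform value unknown to the program, so $f(f(x_0))$ is essentially uniform and the statement about $x_0$ is wrong for all but a $\widetilde{O}(1/n)$-fraction of $f$. I would then cut the program into $\ell=\lceil T/a\rceil$ consecutive stages of length at most $a:=c_1\sqrt{n(S+\log T)/\log n}$ for a small constant $c_1$ (if $a\geq T$ take $\ell=1$). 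Being oblivious, stage $k$ queries a \emph{fixed} set $Q_k\subseteq[n]$ with $|Q_k|\leq a$, and the list of output statements emitted during stage $k$ is determined by the starting vertex (one of at most $2^S$) together with the restriction $f|_{Q_k}$ — i.e.\ it is a depth-$|Q_k|$ decision tree over $Q_k$.

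The core step is the claim that, for every stage $k$ and every starting vertex, the number of emitted statements about pairs $(x,f(f(x)))$ that are \emph{correct} is at most $C_2(S+\log T)$ except with probability $2^{-\Omega(S+\log T)}$ over $f\sim\mathcal{D}$, for appropriate constants. A statement about $x$ whose value is forced by $f|_{Q_k}$ must have both $x\in Q_k$ and $f(x)\in Q_k$, so the number of such $x$ is a sum of $|Q_k|$ independent $\mathrm{Ber}(|Q_k|/n)$ indicators of mean $|Q_k|^2/n\leq a^2/n=c_1^2(S+\log T)/\log n$, hence $O(S+\log T)$ by a Chernoff bound — this is the standard Borodin--Cook estimate that $a$ queries into $f$ reveal only about $a^2/n$ values of $f\circ f$. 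Every other correct statement must be correct \emph{by luck}: if $x\notin Q_k$ then $f(x)$ is an unqueried uniform coordinate, and since the program commits to one claimed value $y_x$, the statement is correct with probability at most $\bigl(\text{largest multiplicity among }\{f(j):j\in Q_k\}\bigr)/n+O(1/n)=\widetilde{O}(1/n)$ on the overwhelmingly likely event that no value recurs more than $O(\log n)$ times among those $|Q_k|$ coordinates; and if $x\in Q_k$ but $f(x)\notin Q_k$ then $f(f(x))$ is an unqueried uniform coordinate, correct with probability $O(1/n)$. Thus the expected number of lucky correct statements is $\widetilde{O}(1)$; the luck events are not independent, but revealing the first-step values $(f(x))_{x\notin Q_k}$ and then the relevant second-step values, while grouping statements that hinge on the same free coordinate of $f$ (at most one such statement can be correct), rewrites the count as a sum of independent indicators, and another Chernoff bound supplies the $2^{-\Omega(S+\log T)}$ tail.

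Granting this claim, I would finish with a union bound over the $\ell\leq 2T/a$ stages and the $\leq 2^S$ starting vertices: for a $(1-o(1))$-fraction of $f\sim\mathcal{D}$ every stage then emits at most $C_2(S+\log T)$ correct statements, while a correct run emits $n$ correct statements in all, so such an $f$ forces $n\leq\ell\cdot C_2(S+\log T)\leq 2C_2\,T(S+\log T)/a$. Substituting $a=c_1\sqrt{n(S+\log T)/\log n}$ and squaring yields $T^2(S+\log T)\geq\Omega(n^3/\log n)$, which is the asserted $T^2S\geq\widetilde{\Omega}(n^3)$ (one may assume $T\leq n^{1.5}$, else the bound is trivial, so $\log T=O(\log n)$ is absorbed; and the degenerate case $a\geq T$, $\ell=1$ gives $n\leq C_2(S+\log T)$, which with $T\geq n$ already yields $T^2S=\Omega(n^3)$). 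The only genuinely technical point is the luck tail bound in the middle paragraph: it is precisely where one must exploit that an oblivious stage, though it learns $f$ on $a\gg S$ coordinates, can chain those into only about $a^2/n$ truly-determined outputs $f(f(x))$ and cannot profitably guess the rest; everything else is bookkeeping and routine concentration.
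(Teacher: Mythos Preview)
Your proposal is correct and follows exactly the route the paper indicates (a direct Borodin--Cook argument over the uniform input distribution), the details of which the paper itself omits. The one loose spot---the assertion that among lucky statements hinging on the same free coordinate of $f$ ``at most one can be correct''---fails when their claimed values coincide, but replacing it by the $O(\log n)$ multiplicity bound you already invoke patches this and leaves the final $T^2S\geq\widetilde{\Omega}(n^3)$ intact.
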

	Notice that \Cref{prop:2pc} provides an example of polynomial separation between oblivious and non-oblivious time-space tradeoffs of total functions. The bound is also tight and can be achieved via the following simple algorithm: In each round pick two random subsets $X,Y\subseteq[n]$ with $|X|=|Y|=\sqrt{nS}$. We store at most $\widetilde{O}(S)$ pairs of $(x,f(x))\in X\times Y$ by querying $f$ on $X$, and output the corresponding $(x,f(f(x)))$ by querying $f$ on $Y$. Each pair in a round is found with probability close to $S/n$, and thus $\widetilde{O}(n/S)$ rounds suffices.
	
	The above algorithm heavily relies on the fact that $Y$ is decided entirely by randomness and hardwired into the branching programs. A natural question is whether the same time-space tradeoff holds for oblivious computation with weaker notions of randomness, or even without randomness at all. In \Cref{thm:2pc} below, we show that proving impossibility results to this question will give answers to \Cref{openproblem}. We first need to introduce the single-output function, \textsc{ExpanderMatching} based on the explicit unbalanced bipartite expanders by Guruswami, Umans and Vadhan \cite{GUV09}.
	
	\begin{definition}
		A bipartite graph $\Gamma\subseteq[n]\times[m]$ is a $(k,a)$-expander if for every subset $L\subseteq [n]$ with $|L|\leq k$, the number of the neighbors of $L$ is at least $a\cdot|L|$.
	\end{definition}
	
	\begin{theorem}[\cite{GUV09}]\label{thm:guv}
		For every constant $\alpha>0$, given $n\in\mathbb{N}$ and $k\leq n$, there is an explicitly constructed bipartite graph $\Gamma_{\alpha,n,k}\subseteq[n]\times[m]$ which is a $(k,1)$-expander, with $|\Gamma_{\alpha,n,k}|=\widetilde{O}(n)$ and $m\leq \widetilde{O}(k^{1+\alpha})$.
	\end{theorem}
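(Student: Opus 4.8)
The plan is to obtain \Cref{thm:guv} as an instantiation of the explicit lossless expanders of Guruswami, Umans and Vadhan, so that the only real work is collapsing their parameters. Recall that for every $\epsilon>0$ and every pair $N,K$ with $K\le N$, \cite{GUV09} construct an explicit bipartite graph with left set $[N]$, right set $[M]$ and left-degree $D$ that is a $(K,(1-\epsilon)D)$-expander, where $D=\polylog(N)\cdot\poly(1/\epsilon)$ (with the exponent depending on $\alpha$) and $M\le\poly(D)\cdot K^{1+\alpha}$. I would apply this with $N=n$, $K=k$ and $\epsilon=1/2$: since $k\le n$ the degree is $D=\polylog(n)$, so the edge count is $|\Gamma|=ND=\widetilde{O}(n)$ and the right side has size $M\le\poly(D)\cdot k^{1+\alpha}=\widetilde{O}(k^{1+\alpha})$. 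Finally $(1-\epsilon)D=D/2\ge 1$, so a $(k,(1-\epsilon)D)$-expander is in particular a $(k,1)$-expander; taking $m=M$ and relabelling the right vertices gives the statement. (If one wants $N=n$ exactly rather than some convenient size, restrict to any $n$ of the left vertices, since expansion is monotone under deleting left vertices.)

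For completeness it is worth recalling the object behind this theorem, since that is where the content lies. Fix a prime power $q$, an irreducible $E(X)\in\mathbb{F}_q[X]$ of degree $n_0$ with $q^{n_0}\ge N$, and parameters $h$ (folding) and $c$ (repetition); identify left vertices with polynomials $f\in\mathbb{F}_q[X]$ of degree $<n_0$. The graph has left-degree $D=q$: the edge of $f$ labelled by $y\in\mathbb{F}_q$ points to
\[
\big(y,\ f(y),\ (f^{h}\bmod E)(y),\ \ldots,\ (f^{h^{c-1}}\bmod E)(y)\big)\in\mathbb{F}_q^{\,c+1},
\]
so $M=q^{c+1}$. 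One then sets $h\approx K^{1/c}$ and $c\approx 1/\alpha$ so that $h^{c}\gtrsim K$ while $M=q^{c+1}$ stays as small as $\widetilde{O}(K^{1+\alpha})$, with $q=\poly(n_0,h,c,1/\epsilon)$ chosen large enough for the degree inequalities below to go through; everything here is computable in time $\poly(N)$, giving explicitness.

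The expansion claim is then the Parvaresh--Vardy list-recovery argument, and this is the step I expect to be the main obstacle. For a left set $L$ put $T=\Gamma(L)$ and, for each seed $y$, $T_y=\{v\in\mathbb{F}_q^{c}:(y,v)\in T\}$, so $\sum_y|T_y|=|T|$ and every $f\in L$ has $\big(f(y),(f^{h}\bmod E)(y),\ldots\big)\in T_y$ for all $y$. If $|T|$ were too small, a dimension count yields a nonzero $Q(X,Y_0,\ldots,Y_{c-1})$ with $\deg_X Q<n_0$ and each $\deg_{Y_i}Q<h$ that vanishes at all pairs $(y,v)$ with $v\in T_y$; then for every $f\in L$ the univariate polynomial $Q\big(X,f(X),(f^{h}\bmod E)(X),\ldots\big)$ vanishes on all of $\mathbb{F}_q$, hence is identically $0$ once $q$ exceeds its (bounded) degree. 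Reducing modulo $E(X)$ and using $f^{h^{i}}\equiv(f\bmod E)^{h^{i}}$ shows $\beta:=f\bmod E$ is a root of the single nonzero univariate polynomial $Q\big(X\bmod E,\ Y,\ Y^{h},\ldots,Y^{h^{c-1}}\big)$ over $\mathbb{F}_q[X]/(E)$ — nonzero because $Y_i\mapsto Y^{h^{i}}$ is injective on monomials by uniqueness of base-$h$ digit expansions — of degree below $h^{c}$, which bounds $|L|$. The delicate part is balancing the three constraints on $Q$ simultaneously (so that a nonzero interpolant exists, the composed univariate polynomial has degree below $q$, and it survives reduction modulo $E$), and then doing the per-seed bookkeeping and tuning $h,c,q,n_0$ so that the resulting threshold on $|T|$ certifies the expansion factor $(1-\epsilon)D$ while keeping $M=\widetilde{O}(K^{1+\alpha})$.
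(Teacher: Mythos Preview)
Your proposal is correct, but there is nothing to compare it to: the paper does not prove \Cref{thm:guv} at all. It is stated as a black-box citation of \cite{GUV09}, followed only by the remark that the original result is stronger (expansion close to the degree rather than merely $\ge 1$). Your first paragraph already does exactly what is needed to extract the stated version from the GUV theorem --- set $\epsilon=1/2$, read off $D=\polylog(n)$ so that $|\Gamma|=nD=\widetilde{O}(n)$ and $M=\poly(D)\cdot k^{1+\alpha}=\widetilde{O}(k^{1+\alpha})$ under the paper's convention that $\widetilde{O}$ hides $\polylog(n)$ factors, and note $(1-\epsilon)D\ge 1$. The remaining paragraphs, sketching the Parvaresh--Vardy construction and the list-recovery argument, are accurate and go well beyond what the paper itself supplies; they are not required here since the paper is content to cite the result, but nothing in them is wrong.
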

	
	The original result in \cite{GUV09} is stronger than stated in \Cref{thm:guv}, with the expansion factor $a$ arbitrarily close to the degree $|\Gamma_{\alpha,n,k}|/n=\polylog(n)$. For our application, we only need expansion to be no less than $1$. We use the graph to construct an explicit single-output function as follows:
	
	\begin{definition}
		The \textsc{$(\alpha,n,k)$-ExpanderMatching} problem is a function $[n]^k\times [m]\rightarrow [n]\cup\{\perp\}$, with $m$ decided by \Cref{thm:guv}. Given the input $L\in [n]^k$ and $y\in[m]$, we think of $L$ as a subset of $[n]$ with $|L|\leq k$. There exists a matching for $L$ in $\Gamma_{\alpha,n,k}$ because of the $(k,1)$-expander property, and we consider the lexicographically smallest matching $\mathcal{M}:L\rightarrow[m]$ in $\Gamma_{\alpha,n,k}$. The output of the problem is $\mathcal{M}^{-1}(y)$ if it exists, or $\perp$ if not.
	\end{definition}
	
	\begin{theorem}\label{thm:2pc}
		For every constant $\alpha>0$, if \textsc{$(\alpha,n,k)$-ExpanderMatching} can be solved by deterministic oblivious branching programs with space $\widetilde{O}(1)$ and time $\widetilde{O}(k)$, then for every $S\leq n$, there is a deterministic oblivious branching program solving \textsc{2-PC} with space $\widetilde{O}(S)$ and time $\widetilde{O}(\sqrt{n^{3+\alpha}/S})$.
	\end{theorem}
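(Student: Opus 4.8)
The plan is to derandomize the randomized oblivious algorithm for \textsc{2-PC} outlined after \Cref{prop:2pc}: it runs in $\widetilde{O}(n/S)$ rounds, and in a round handling a batch $X\subseteq[n]$ (of size about $\sqrt{nS}$) it learns $f|_X$, stores the at most $\widetilde{O}(S)$ pairs $(x,f(x))$ with $f(x)$ in a random set $Y$ of the same size, then streams through $Y$ to output the corresponding $(x,f(f(x)))$. The only randomness is $Y$, and its job is (i) to catch enough of the $x\in X$ so the round makes progress, (ii) to catch few enough that the $\widetilde{O}(S)$ stored pairs fit in memory, and (iii), over all rounds, to catch every $x$. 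A fixed set $Y$ cannot replace a random one for (ii): an adversarial $f$ could map all of $X$ into $Y$. I would instead use the assumed space-$\widetilde{O}(1)$, time-$\widetilde{O}(k)$ algorithm for $(\alpha,n,k)$-\textsc{ExpanderMatching} as a load-balancer. After learning $f|_X$ for a \emph{fixed} block $X$ with $|X|=k$ (the parameter $k$ to be optimized), feed the list $(f(x))_{x\in X}$ to that algorithm; for any queried slot $b\in[m]$, $m=\widetilde{O}(k^{1+\alpha})$, it returns $\mathcal{M}^{-1}(b)$, where $\mathcal{M}$ is the lexicographically smallest matching of $\{f(x):x\in X\}$ into $[m]$ in $\Gamma_{\alpha,n,k}$. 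Because $\mathcal{M}$ is injective, each slot receives at most one value --- so grouping slots into blocks of size $\Theta(S)$ bounds, \emph{deterministically and regardless of} $f$, the number of distinct values handled per block by $S$.

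For a block $C$ I would: (a) call \textsc{ExpanderMatching} to compute the at most $|C|=O(S)$ matched values $v_b:=\mathcal{M}^{-1}(b)$, $b\in C$, and store them; (b) learn $f(v_b)$ for all of them \emph{without any data-dependent query} --- the key obliviousness trick --- by querying $f$ on the \emph{fixed} left-neighborhood $N_{\Gamma_{\alpha,n,k}}(b)\subseteq[n]$ of each $b\in C$ (one has $v_b\in N_{\Gamma_{\alpha,n,k}}(b)$), keeping in memory only the $\le S$ pairs $(v_b,f(v_b))$; (c) stream once more over $x\in X$, read $f(x)$, and whenever $f(x)\in\{v_b:b\in C\}$ output $(x,f(f(x)))$ using the stored value. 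Since $x$ is resolved in the block containing $\mathcal{M}(f(x))$, one sweep over all blocks clears the batch $X$, and one sweep over the $n/k$ batches clears $[n]$. Every query is to a fixed index --- an index belonging to $X$, an index dictated by the oblivious query pattern of \textsc{ExpanderMatching}, or a vertex of some fixed $N_{\Gamma_{\alpha,n,k}}(b)$ --- so the resulting branching program is oblivious with space $\widetilde{O}(S)$; its running time is governed by the number of \textsc{ExpanderMatching} calls (each $\widetilde{O}(k)$) and by the neighborhood sweeps, which over a full pass of slots cost only $\widetilde{O}(|\Gamma_{\alpha,n,k}|)=\widetilde{O}(n)$ per batch since each edge is visited once. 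Choosing $k$ to balance these contributions (and against the constraint relating $k$ to the space one may afford to store $f|_X$) yields time $\widetilde{O}(\sqrt{n^{3+\alpha}/S})$; the extra $n^{\alpha/2}$ over the randomized bound $\widetilde{O}(\sqrt{n^{3}/S})$ is exactly the $k^{1+\alpha}$ expansion loss of \Cref{thm:guv}.

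The main obstacle I expect is keeping the query pattern oblivious while simultaneously controlling the \emph{number} of \textsc{ExpanderMatching} invocations and the cost of the neighborhood sweeps. Naively scanning all $m=\widetilde{O}(k^{1+\alpha})$ slots of a batch, or invoking the primitive once per $(x,\text{block})$ pair, is too expensive; instead one must enumerate only the matched slots --- using that each value has only $\widetilde{O}(1)$ neighbors in $\Gamma_{\alpha,n,k}$, so the matched slots lie in $\bigcup_{x\in X}N_{\Gamma_{\alpha,n,k}}(f(x))$, a set of size $\widetilde{O}(k)$ --- and route each streamed $x$ to its unique block without extra memory or extra queries. The per-block storage bound is essentially free from injectivity of $\mathcal{M}$, so the delicate part is this routing together with the case analysis that makes a single choice of $k$ valid across the whole range $S\le n$ (in particular, whether to store $f|_X$ outright or re-read it when $k$ exceeds the space budget).
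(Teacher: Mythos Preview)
Your high-level ingredients match the paper's (blocks of size $k$, feed $f(B)$ into \textsc{ExpanderMatching}, recover $f(\mathcal{M}^{-1}(y))$ obliviously by sweeping the left-neighborhood $N_\Gamma(y)$), but the way you spend the space budget $S$ does not buy any time. In your scheme $S$ is used only to group \emph{slots} $y\in[m]$ into chunks $C$ of size $S$; this caps the number of stored pairs per chunk but leaves the total work per batch unchanged. Concretely, your per-batch cost is $\widetilde{O}(km)$ for the \textsc{ExpanderMatching} calls, $\widetilde{O}(n)$ for the full edge sweep, and $\widetilde{O}(km/S)$ for the re-streams over $X$, so over all $n/k$ batches you get $\widetilde{O}\bigl(nk^{1+\alpha}+n^2/k\bigr)$---independent of $S$. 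No choice of $k$ turns this into $\widetilde{O}(\sqrt{n^{3+\alpha}/S})$.

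The paper puts $S$ on the other axis: it processes $S$ \emph{input} blocks $B_{1+\ell S},\ldots,B_{S+\ell S}$ together, and iterates over slots $y\in[m]$ one at a time. For a fixed $y$ it calls \textsc{ExpanderMatching} once per block (storing the $S$ values $u_1,\ldots,u_S$, at most one per block), does a \emph{single} sweep over $N_\Gamma(y)$, and then streams once over the $kS$ inputs. The point is that the $\widetilde{O}(n)$ edge sweep (summed over all $y$) is now shared among $S$ blocks, so the total time becomes $\frac{n}{kS}\bigl(m\cdot\widetilde{O}(kS)+\widetilde{O}(n)\bigr)=\widetilde{O}\bigl(nk^{1+\alpha}+n^2/(kS)\bigr)$, and $k=\sqrt{n/S}$ gives the bound. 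Your ``obstacle'' paragraph senses the mismatch but the proposed fix---enumerating only the matched slots via $\bigcup_{x\in X}N_\Gamma(f(x))$---depends on $f$ and therefore destroys obliviousness; the remedy is not to prune the slot loop but to amortize it across $S$ input blocks as above.
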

	\begin{proof}
		We partition $[n]$ into blocks $B_1\sqcup \cdots \sqcup B_{n/k}$ of size $k$, with $k$ to be optimally chosen later. The deterministic oblivious algorithm for \textsc{2-PC} consists of $n/(kS)$ stages, where in each stage we output $(x,f(f(x)))$ all $x$ in $S$ consecutive blocks. In order to do so we need to query $f$ on $f(B_i)$, but as the queries are oblivious, we instead query $f$ on the neighbors of $y$ in $\Gamma_{\alpha,n,k}$ for each $y\in[m]$. Since $|f(B_i)|\leq k$, the matching for $f(B_i)$ provides all the answers for $x\in B_i$. More concretely, the algorithm is described as \Cref{alg1}.
		
		\begin{algorithm}
			\DontPrintSemicolon
			\caption{Deterministic Oblivious Algorithm for \textsc{2-PC}}\label{alg1}
			\For{$\ell\gets 0,\ldots,n/(kS)-1$}{
				\For {$y\in[m]$}{
					\For {$i\in[S]$}{
						Apply \textsc{$(\alpha,n,k)$-ExpanderMatching} on $f(B_{i+\ell S})\in[n]^k$ and $y$; \\
						Store the answer $u_{i} \in [n]\cup\{\perp\}$.
					}
					\ForEach {$v\in[n]$ such that $(v,y)\in \Gamma_{\alpha,n,k}$}{
						Query $f(v)$; \\
						\lIf {$v=u_i$ for some $i\in[S]$}{attach $f(u_i)$ to $u_i$.}
					}
					\For {$x\in B_{\ell S+1}\sqcup\cdots\sqcup B_{(\ell+1)S}$}{
						Query $f(x)$; \\
						\lIf {$f(x)=u_i$ for some $i\in[S]$}{output $(x,f(u_i))$.}
					}
				}
			}
		\end{algorithm}
	
		To prove the correctness, it suffices to show that every $x\in[n]$ is outputted. This is guaranteed in every block $B_{i+\ell S}$, as when $y$ goes through $[m]$, every element in $f(B_{i+\ell S})$ is matched and appears as $u_i$ at some point.
		
		The space complexity is clearly $\widetilde{O}(S)$ as the bottleneck is storing $u_i$ and $f(u_i)$ for $i\in[S]$. To identify the time complexity, notice that $f$ is queried in all three inner loops. For each $\ell$ and $y$, the \textsc{$(\alpha,n,k)$-Expander Matching} algorithm makes $\widetilde{O}(kS)$ queries in total, while querying $f(x)$ for $x\in B_{\ell S+1}\sqcup\cdots\sqcup B_{(\ell+1)S}$ also takes $O(kS)$ time. Besides, for each $\ell$, querying $f(v)$ for every edge $(v,y)\in \Gamma_{\alpha,n,k}$ takes $|\Gamma_{\alpha,n,k}|=\widetilde{O}(n)$ time. Therefore the total number of oblivious queries is
		\[\frac{n}{kS}\left(m\cdot \widetilde{O}(kS)+\widetilde{O}(n)\right)=\widetilde{O}\left(k^{1+\alpha}n+\frac{n^2}{kS}\right).\]
		Taking $k=\sqrt{n/S}$, the above expression is upper bounded by $\widetilde{O}(\sqrt{n^{3+\alpha}/S})$.
	\end{proof}

	As a direct corollary of \Cref{thm:2pc}, if we managed to prove a polynomial separation between randomized and deterministic oblivious time-space tradeoffs of \textsc{2-StepPointerChasing}, it would imply a strong lower bound for \textsc{$(\alpha,n,k)$-ExpanderMatching} for some $\alpha>0$ and thus would answer \Cref{openproblem}. 
	
	\subsection{Element Distinctness and Collision Finding}\label{sect:ed}
	
	We recall the definition of the \textsc{ElementDistinctness} problem.
	
	\begin{definition}
		In the \textsc{ElementDistinctness} (\textsc{ED} for short) problem, the input is a list of $n$ elements from a fixed domain $D$, with $|D|=\poly(n)$. The output is $1$ if all elements are distinct, and $0$ otherwise.
	\end{definition}

	A randomized algorithm for \textsc{ED} with $T^2S=\widetilde{O}(n^3)$ was given in \cite{BCM13}, and it was later improved to use only one-way access to randomness in \cite{CJWW22,LZ23}. Based on the same algorithm, they also showed that the \textsc{SetIntersection} problem (given two sets $A$ and $B$ of size $n$, output $A\cap B$) can be solved in $T^2S=\widetilde{O}(n^3)$, and the tradeoff is known to be optimal \cite{Dinur20}. Different variants of this problem was also studied, such as memory games \cite{chakrabarti2017time} and $n$-collision finding \cite{Dinur20}, which share the same tight tradeoff for randomized algorithms.
	
	Here we present a general form of \textsc{SetIntersection}, that covers all the variants when two sets that each contains no duplicates are given, and show its black-box relationship with \textsc{ED}:
	
	\begin{definition}
		In the \textsc{SetCollision} problem, the input contains two sets $A,B\subseteq D$ given as unordered lists $(a_1,\ldots,a_n)$ and $(b_1,\ldots,b_n)$ that contain no duplicated elements in each list itself. The output consists of all collisions, that are triples $(i,j,x)$ such that $a_i=b_j=x$.
	\end{definition}
	
	\begin{theorem}\label{thm:ed}
		If \textsc{ED} can be solved deterministically with space $\widetilde{O}(1)$ and time $\widetilde{O}(n)$, then \textsc{SetCollision} can be solved deterministically with space $\widetilde{O}(1)$ and time $\widetilde{O}(n^{3/2})$. Furthermore, if the algorithm for \textsc{ED} is oblivious, then for every $S\leq n$, there is a deterministic (non-oblivious) algorithm that solves \textsc{SetCollision} with space $\widetilde{O}(S)$ and time $\widetilde{O}(\sqrt{n^3/S})$.
	\end{theorem}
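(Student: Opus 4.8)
The plan is to reduce \textsc{SetCollision} to repeated \textsc{ED} queries, via two classical devices — partitioning both input sets into blocks and binary-searching to locate individual collisions — and, for the space-$S$ statement, a batching trick that exploits the fact that the \textsc{ED} subroutine is oblivious. For the first part, note that since $(a_1,\dots,a_n)$ and $(b_1,\dots,b_n)$ each contain no repeats, the concatenation $X\circ Y$ of a sublist $X$ of $A$ and a sublist $Y$ of $B$ has a duplicated element if and only if $X\cap Y\neq\varnothing$; hence a single call to \textsc{ED} on $X\circ Y$ — simulated by an $\widetilde{O}(1)$-space wrapper that rewrites \textsc{ED}'s query positions into indices of $X$ or of $Y$ — decides in time $\widetilde{O}(|X|+|Y|)$ whether $X$ and $Y$ intersect. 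Partition $A$ and $B$ each into $\sqrt{n}$ blocks of size $\sqrt{n}$; for each of the $n$ block pairs $(A_i,B_j)$ test via one such call whether $A_i\cap B_j\neq\varnothing$, costing $\widetilde{O}(\sqrt{n})$ per pair and $\widetilde{O}(n^{3/2})$ in total. For the pairs that intersect, recover all their collisions recursively: halve $A_i$, test each half against $B_j$ with \textsc{ED}, recurse into whichever halves intersect, and on reaching a singleton $\{a_\iota\}$ that meets $B_j$, binary-search $B_j$ (again by \textsc{ED} calls on $\{a_\iota\}\circ(\text{sublist of }B_j)$) for the $j'$ with $b_{j'}=a_\iota$, output $(\iota,j',a_\iota)$, and continue. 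The nodes at which \textsc{ED} answers ``intersecting'' form a tree with $m_{ij}:=|A_i\cap B_j|$ leaves and depth $O(\log n)$, so $(A_i,B_j)$ is processed in $\widetilde{O}(\sqrt{n}+m_{ij}\sqrt{n})$ time; since each colliding element of $A$ belongs to exactly one block pair, $\sum_{i,j}m_{ij}\le n$ and the total enumeration cost is $\widetilde{O}(n^{3/2})$. The space is $\widetilde{O}(1)$: logarithmic recursion depth, $O(1)$ indices per frame, and one $\widetilde{O}(1)$-space \textsc{ED} invocation at a time.

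For the space-$S$ statement the obstruction is that each of the $n$ \textsc{ED} calls above costs $\widetilde{O}(\sqrt{n})$ irrespective of $S$; to profit from more memory I would use \emph{wider} blocks but \emph{batched} \textsc{ED} calls. The key lemma is: if \textsc{ED} is oblivious, then given sublists $X_1,\dots,X_S$ of $A$ each of size $x$ and a single sublist $Y$ of $B$ of size $y$, one can decide all of $[X_t\cap Y\neq\varnothing]$, $t\in[S]$, in time $\widetilde{O}(Sx+y)$ and space $\widetilde{O}(S)$, by running $S$ copies of the oblivious \textsc{ED} on inputs of length $x+y$ in lockstep. At every step all $S$ copies query the same position, so a query into the shared $Y$-part is answered by one physical read shared by all copies, whereas a query into the private part costs $S$ reads; placing the $x$ private coordinates at the $x$ positions that \textsc{ED}'s fixed query sequence visits least often — permissible because distinctness is permutation-invariant — makes at most $\widetilde{O}(x)$ of \textsc{ED}'s $\widetilde{O}(x+y)$ queries fall in the private part, since the $x$ smallest among $x+y$ nonnegative integers with sum $\widetilde{O}(x+y)$ have sum $\widetilde{O}(x)$. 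Using this, partition $A$ into blocks of size $k_A=\sqrt{n/S}$ (there are $\sqrt{nS}$ of them, hence $\sqrt{n/S}$ batches of $S$ consecutive ones) and $B$ into $\sqrt{n/S}$ blocks of size $k_B=S\cdot k_A=\sqrt{nS}$. For each of the resulting $n/S$ pairs (a batch of $S$ $A$-blocks together with one $B$-block), one batched detection call costs $\widetilde{O}(Sk_A+k_B)=\widetilde{O}(\sqrt{nS})$, for $\widetilde{O}(n^{3/2}/\sqrt{S})$ overall; the collisions inside a block pair are then extracted by a batched variant of the recursive binary search, which locates one colliding $A$-element from each of the $S$ blocks per invocation of the key lemma and then their $B$-partners, at amortized cost $\widetilde{O}(k_B/S)$ per collision, summing to $\widetilde{O}(k_B/S)\cdot\sum_{i,j}m_{ij}=\widetilde{O}(nk_A)=\widetilde{O}(n^{3/2}/\sqrt{S})$. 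Since this algorithm reads blocks straight from the (freely accessible) input, its only $\widetilde{O}(S)$-sized state is the $S$ lockstepped \textsc{ED} copies.

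The step I expect to be the main obstacle is carrying out this batched enumeration so that the bound is genuinely $\widetilde{O}(n^{3/2}/\sqrt{S})$: one must synchronize $S$ binary searches whose $B$-subintervals diverge after the first split, gracefully retire blocks that exhaust their collisions before their batchmates, control lower-order overheads (e.g. a $\widetilde{O}(S^2)$-type term arising in the partner search, harmless when $S$ is not too large), and make sure block pairs with few or no collisions are charged only against the $\widetilde{O}(n^{3/2}/\sqrt{S})$ detection budget. The remaining ingredients — the wrapper feeding sublists to \textsc{ED}, the counting argument behind the permutation choice, and the balancing $n^2/(k_AS)=nk_A$ solved by $k_A=\sqrt{n/S}$ — are routine.
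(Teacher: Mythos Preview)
Your first part is correct and essentially the paper's argument: the paper also reduces to repeated \textsc{ED} calls via divide-and-conquer, though it recurses symmetrically from the top (halving \emph{both} intervals and branching into four subproblems) rather than fixing a $\sqrt{n}\times\sqrt{n}$ grid first; the time analysis is the same balancing of ``at most $(n/\ell)^2$ pairs at level $\ell$'' against ``at most $O(n)$ live pairs at level $\ell$ because there are at most $n$ collisions.''

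For the second part your key lemma is a genuinely different device from the paper's. The paper never permutes the input to push the private coordinates onto the least-queried positions; instead it runs the \emph{symmetric} recursion of the first part in parallel on a batch of $S$ instances, and bounds the number of physical reads per oblivious step by the number of \emph{distinct} interval offsets present in the batch (using that the intervals at each level are either equal or disjoint). Concretely, it starts the recursion at level $\ell=\sqrt{n}$ on a $\sqrt{S}\times\sqrt{S}$ product of offset groups, maintains a queue of pending subproblems, and flushes the queue by a parallel \textsc{ED} call whenever it reaches size $S$; each flush costs $\widetilde{O}(\sqrt{nS})$ and there are $\widetilde{O}(n/S)$ flushes in total. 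Your permutation trick is a nice alternative for the \emph{detection} phase and gives the same $\widetilde{O}(\sqrt{nS})$ per batch.

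Where your plan has an actual gap is exactly the step you flag, and the issue is sharper than you state. Your enumeration strategy (``locate one colliding $A$-element from each of the $S$ blocks per invocation'') gives the amortized cost $\widetilde{O}(k_B/S)$ per collision only when the collisions are spread roughly evenly across the $S$ blocks. If, say, a single $A$-block contains $k_A$ collisions with the current $B$-block and the other $S-1$ blocks contain none, then each of your $k_A$ rounds still pays the full shared cost $\widetilde{O}(k_B)$ for the $B$-block while producing just one collision, so the per-collision cost degrades to $\widetilde{O}(k_B)$ rather than $\widetilde{O}(k_B/S)$. ``Gracefully retiring'' exhausted blocks does not help here, because nothing replaces them in the batch. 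The paper's queue mechanism sidesteps this completely: it never tries to extract one collision per block, but instead keeps \emph{all} live sub-instances (from whichever blocks) in a queue and recurses on a full batch of $S$ of them at a time, so the shared cost is always amortized over $S$ active subproblems. You can patch your approach in the same spirit---recurse into \emph{both} halves of every intersecting $A$-subinterval (not just one), collect the resulting live subintervals across all $S$ blocks into a single queue, and apply your key lemma to batches of $S$ of them with the common $Y=B_j$---and the per-level accounting then goes through. The diverging-$B$-interval worry in the partner search, by contrast, is a non-issue: once you have up to $S$ colliding $A$-singletons in memory, a single linear scan of the $B$-block (with the singletons sorted) locates all their partners in $\widetilde{O}(k_B)$ time, no batched binary search needed.
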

	\begin{proof}
		
		We first present a simple divide-and-conquer algorithm $\mathcal{A}$ for solving \textsc{SetCollision}. The algorithm $\mathcal{A}(\ell,s,s')$, which find the collisions between two intervals of length $l$, is described recursively as \Cref{alg2}. For the sake of simplicity we assume that $\ell$ is a power of $2$.
		\begin{algorithm}
			\DontPrintSemicolon
			\caption{Divide-and-Conquer Algorithm $\mathcal{A}(\ell,s,s')$} \label{alg2}
			\If {$\ell=1$}{
				\lIf {$a_s=b_{s'}$}{Output $(s,s',a_s)$;}
				\Return.
			}
			\lIf {$\mathrm{ED}(a_s,\ldots,a_{s+\ell-1},b_{s'},\ldots,b_{s'+\ell-1})=1$}{\Return. }
			\kwLet $\ell'\gets\ell/2$; \\
			Sequentially execute $\mathcal{A}(\ell',s,s')$, $\mathcal{A}(\ell',s+\ell',s')$, 
			$\mathcal{A}(\ell',s,s'+\ell')$ and $\mathcal{A}(\ell',s+\ell',s+\ell')$.
		\end{algorithm}	
		
		It is easy to see that $\mathcal{A}(\ell,s,s')$ outputs all the collisions between the two intervals $a[s,\ldots,s+\ell-1]$ and $b[s',\ldots,s'+\ell-1]$, since whenever $\ell>1$ and there exists at least one collision (which is checked by the \textsc{ED} call), the algorithm splits each interval into two halves, and solve all four pairs of sub-intervals with the four recursive calls. Hence $\mathcal{A}(n,1,1)$ solves \textsc{SetCollision}.
		
		The space usage of $\mathcal{A}(n,1,1)$ is $\widetilde{O}(1)$, since there are $O(\log n)$ levels of recursion and each recursive call locally uses $\widetilde{O}(1)$ space. To bound the time usage, the key observation is that there are at most $n$ collisions. Therefore, although there could be as much as $(n/\ell)^2$ possible recursive calls to $\mathcal{A}$ at the level of recursion with interval length $\ell$, there are in fact at most $O(n)$ actual calls within each level, while the rest are prematurely stopped because of the \textsc{ED} check. Taking the summation over $\ell=2^t$ for $t=0,\ldots,\log n$, the total time usage of $\mathcal{A}(n,1,1)$ bounded by
		\[\sum_{t\leq \frac{1}{2}\log n}O(n)\cdot \widetilde{O}(2^t)+\sum_{t>\frac{1}{2}\log n}\left(\frac{n}{2^t}\right)^2\cdot\widetilde{O}(2^t)=\widetilde{O}(n^{3/2}).\]
		
		When the space $S$ is larger, in order to leverage the space advantage and reduce the time usage we need to \emph{parallelize} the algorithm $\mathcal{A}$. However, the core of algorithm $\mathcal{A}$ is the black-box \textsc{ED} algorithm, whose instances cannot be parallelized if they are highly adaptive. Therefore from now on, we assume that the space-$\widetilde{O}(1)$ and time-$\widetilde{O}(n)$ \textsc{ED} algorithm is oblivious.
		
		To understand how oblivious \textsc{ED} algorithm helps parallelization, consider the recursion level with $\ell=\sqrt{n}$. At this level, we need to answer $\mathrm{ED}(a_s,\ldots,a_{s+\ell-1},b_{s'},\ldots,b_{s'+\ell-1})$ for all $n$ pairs of $s,s'\in\{1,\sqrt{n}+1,\ldots,n-\sqrt{n}+1\}$. We can call the oblivious \textsc{ED} algorithm to solve the instance with $s=s'=1$, and call it again to solve another instance with $s=\sqrt{n}+1,s'=1$. Because the algorithm is oblivious, whenever $a_i$ (resp. $b_i$) is queried in the first instance, $a_{i+\sqrt{n}}$ (resp. $b_i$) is queried in the second instance at the exact same time step. That means the two algorithm instance can be interleaved, using double the space while the queries to $B$ do not need to be repeated. Take a step further, we can interleave the $4$ instances of \textsc{ED} with $s,s'\in\{1,\sqrt{n}+1\}$, using $4$ times the space but only \emph{double} the time.
		
		In our actual algorithm, we partition $\{1,\sqrt{n}+1,\ldots,n-\sqrt{n}+1\}$ into $\sqrt{n/S}$ groups, each of size $\sqrt{S}$. With the idea stated above, for each pair of groups of $s$ and $s'$, we can solve all the \textsc{ED} instances within this pair (there are $S$ instances) with space $\widetilde{O}(S)$ and time $\widetilde{O}(\sqrt{nS})$. As there are $n/S$ pairs of groups, the overall time usage all the \textsc{ED} instances at level $\ell=\sqrt{n}$ is $\widetilde{O}(\sqrt{n^3/S})$.
		
		More generally, using the same idea, we design a parallelized version of $\mathcal{A}$, which is the algorithm $\mathcal{A}^*(\ell, (s_i,s_i')_{i\in I})$ described as \Cref{alg3} below, that takes as an argument a list of $|I|\leq S$ pairs of $s$ and $s'$.
		
		\begin{algorithm}
			\DontPrintSemicolon
			\caption{Parallelized Divide-and-Conquer Algorithm $\mathcal{A}^*(\ell, (s_i,s_i')_{i\in I})$} \label{alg3}
			\If {$\ell=1$}{
				\For {$i\in I$}
					{\lIf {$a_{s_i}=b_{s_i'}$}{Output $(s_i,s_i',a_{s_i})$;}}
				\Return.
			}
			Solve $e_i\gets\mathrm{ED}(a_{s_i},\ldots,a_{s_i+\ell-1},b_{s_i'},\ldots,b_{s_i'+\ell-1})$ for all $i\in I$ in parallel; \\
			\kwLet $\ell'\gets\ell/2$, $Q\gets\varnothing$; \\
			\ForEach {$i\in I$ such that $e_i=0$}{
				\For {$(\Delta s,\Delta s')\gets (0,0),(\ell',0),(\ell',0),(\ell',\ell')$}{
					Add $(s_i+\Delta s,s_i'+\Delta s')$ to the queue $Q$; \\
					\If {$|Q|=S$ or reaching the end of the algorithm}{
						Execute $\mathcal{A}^*(\ell',Q)$; \\
						$Q\gets\varnothing$.
					}
				}
			}		
		\end{algorithm}
		
		It is clear from the description that $\mathcal{A}^*(\ell, (s_i,s_i')_{i\in I})$ functions the same as the sequential execution of $\mathcal{A}(\ell,s_i,s_i')$ for all $i\in I$. Our final algorithm for \textsc{SetCollision} is to run sequentially $\mathcal{A}^*(\sqrt{n},G\times G')$, for all $G$ and $G'$ chosen from the $\sqrt{n/S}$ groups of size $\sqrt{S}$ that partitions $\{1,\sqrt{n}+1,\ldots,n-\sqrt{n}+1\}$, and thus it correctly outputs all collisions between set $A$ and $B$. Each recursive call of $\mathcal{A}^*$ uses $O(S\log n)$ space locally, plus the $\widetilde{O}(S)$ space to compute at most $S$ instances of \textsc{ED} in parallel. As there are $O(\log n)$ levels of recursion, the overall space usage is $\widetilde{O}(S)$.
		
		To bound the time usage, we first examine how much time is used to solve $S$ instances of \textsc{ED} in parallel. Fix the initial argument $G$ and $G'$ at the start of the recursion and focus on one level of recursion with interval length $\ell$. At this level, one instance of the \textsc{ED} algorithm takes $\widetilde{O}(\ell)$ time. Since the input intervals for these \textsc{ED} instances are either the same or disjoint, each query is repeated for at most $|G|\cdot \sqrt{n}/\ell$ times at its parallel places after the interleaving parallelization. Thus the time usage for solving \textsc{ED} is $\widetilde{O}(|G|\cdot \sqrt{n})=\widetilde{O}(\sqrt{nS})$. As the rest of steps take $O(S)\leq O(\sqrt{nS})$ time, altogether each recursive call of $\mathcal{A}^*$ locally takes $\widetilde{O}(\sqrt{nS})$ time, regardless of the level of recursion.
		
		On the other hand, let us call a recursive call $\mathcal{A}^*(\ell, (s_i,s_i')_{i\in I})$ \emph{complete} if $|I|=S$, and \emph{incomplete} if $|I|<S$. Since there are at most $n$ collisions, at each level of the recursion there are at most $O(n/S)$ complete calls, while each call produces at most one incomplete call in the next level. Initially there are $n/S$ calls, and therefore the total number of calls to $\mathcal{A}^*$ in our final algorithm is $\widetilde{O}(n/S)$. So the total running time is $\widetilde{O}(\sqrt{n^3/S})$.
	\end{proof}
	
	Since \textsc{SetCollision} has the randomized lower bound $T^2S=\widetilde{\Omega}(n^3)$, \Cref{thm:ed} implies that any polynomial separation between randomized and deterministic time-space tradeoffs of \textsc{SetCollision} (or its variants such as \textsc{SetIntersection}) would answer \Cref{openproblem} on \textsc{ElementDistinctness}.

	Notice that in the reduction of \Cref{thm:ed}, the input guarantee that both lists $A$ and $B$ are sets is only used so that \textsc{ED} decides the distinctness between the two lists. Without the guarantee, we can instead resort to the \textsc{ListDistinctness} problem studied in \cite{BGN018}.
	
	\begin{definition}
		In the \textsc{ListDistinctness} problem (\textsc{LD} for short), the input contains two unordered lists $(a_1,\ldots,a_n)$ and $(b_1,\ldots,b_n)$ from a fixed domain $D$, with $|D|=\poly(n)$. The output is $1$ if there exist $i,j\in[n]$ such that $a_i=b_j$, and $0$ otherwise.
	\end{definition}
	
	\textsc{LD} is at least as harder as \textsc{ED}, and while \textsc{ED} can be solved in $\widetilde{O}(1)$ space and $\widetilde{O}(n^{3/2})$ time, no algorithm even with $n^{o(1)}$ space and $n^{2-\Omega(1)}$ time was known for \textsc{LD}. The proof of \Cref{thm:ed} can be altered to show that the problem of \textsc{$n$-Collision} reduces deterministically to \textsc{LD}:
	
	\begin{definition}\label{def:nc}
		In the \textsc{$n$-Collision} problem, the input is an unordered lists $(a_1,\ldots,a_n)$ of elements in $D$. The output consists of $n$ distinct collisions, that are triples $(i,j,x)$ such that $i\neq j$ and $a_i=a_j=x$, or all of the collisions if there are less than $n$ of them.
	\end{definition}

	Strictly speaking, the \textsc{$n$-Collision} problem is not a function, but rather a relational problem, as the collection of outputted collisions is not uniquely determined. However, a time-space lower bound of $T^2S=\widetilde{\Omega}(n^3)$ is still know for \textsc{$n$-Collision} \cite{Dinur20}.
	
	\begin{theorem}\label{thm:ld}
		If \textsc{LD} can be solved deterministically with space $\widetilde{O}(1)$ and time $\widetilde{O}(n)$, then \textsc{$n$\nobreakdash-Collision} can be solved deterministically with space $\widetilde{O}(1)$ and time $\widetilde{O}(n^{3/2})$. Furthermore, if the algorithm for \textsc{LD} is oblivious, then for every $S\leq n$, there is a deterministic (non-oblivious) algorithm that solves \textsc{$n$-Collision} with space $\widetilde{O}(S)$ and time $\widetilde{O}(\sqrt{n^3/S})$.
	\end{theorem}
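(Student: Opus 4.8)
The plan is to adapt the proof of \Cref{thm:ed} almost line by line, with \textsc{LD} in place of \textsc{ED} and \textsc{$n$-Collision} in place of \textsc{SetCollision}, while handling the two genuine differences: there is now only a single list, so an \textsc{LD} call can only test two \emph{disjoint} sub-blocks for a common element and cannot by itself detect an internal duplicate inside one block; and \textsc{$n$-Collision} asks for only $n$ collisions while a list may have up to $\binom n2$ of them, so the ``at most $n$ collisions'' fact that drives the time analysis of \Cref{thm:ed} must be replaced by an explicit halting rule.

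I would first define a two-argument divide-and-conquer routine $\mathcal{A}(\ell,s,s')$ that emits every collision between the length-$\ell$ blocks $a[s,\dots,s+\ell-1]$ and $a[s',\dots,s'+\ell-1]$, maintaining the invariant that these two blocks are always either equal or disjoint. On a disjoint pair with $\ell>1$ it calls $\mathrm{LD}$ on the two blocks, returns if the answer is $0$, and otherwise recurses into the four pairs of halves; on an equal pair with $\ell>1$ it recurses into its half-pairs \emph{without} any \textsc{LD} call (the two within-half recursions plus the single cross half-pair, which is disjoint); a leaf with $\ell=1$ outputs $(s,s',a_s)$ and its mirror iff $a_s=a_{s'}$ with $s\ne s'$. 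A global counter halts the whole computation once $n$ collisions have been emitted, so $\mathcal{A}(n,1,1)$ solves \textsc{$n$-Collision}; the recursion depth is $O(\log n)$, hence the space is $\widetilde O(1)$ on top of the black-box \textsc{LD} routine. For the running time I would split the cost into a \emph{forced} part --- the $O(n)$ equal-pair calls over all levels, plus the one cross child each spawns, whose \textsc{LD} check on a block of length $\ell$ costs $\widetilde O(\ell)$, for a total $\sum_t 2^t\cdot\widetilde O(n/2^t)=\widetilde O(n)$ --- and an \emph{active} part consisting of the disjoint-pair calls that pass their \textsc{LD} test. Since the interval-pairs at a fixed level partition all index pairs, a level with block length $\ell$ has at most $(n/\ell)^2$ interval-pairs containing a collision, and the halting rule caps the number actually entered at $O(n)$: any such pair fully processed before we halt emits at least one leaf output, the leaf-output sets of distinct interval-pairs are disjoint, and only one interval-pair per level is ``in progress'' when we stop. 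Feeding $\min\{(n/\ell)^2,O(n)\}\cdot\widetilde O(\ell)$ into the same level-by-level summation as in \Cref{thm:ed} (splitting at $\ell=\sqrt n$) yields $\widetilde O(n^{3/2})$.

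For the space-$\widetilde O(S)$ statement I would reuse the parallelization of \Cref{thm:ed} essentially verbatim: build $\mathcal{A}^*(\ell,(s_i,s_i')_{i\in I})$ with $|I|\le S$ that simulates the sequential execution of the $\mathcal{A}(\ell,s_i,s_i')$'s, use the obliviousness of the \textsc{LD} algorithm to interleave up to $S$ of its instances --- whose input blocks at any level are pairwise equal or disjoint, so each physical position is queried by $\widetilde O(\sqrt S\cdot\sqrt n/\ell)$ instances and one parallel batch at level $\ell$ costs $\widetilde O(\sqrt{nS})$ --- and start from $\mathcal{A}^*(\sqrt n,G\times G')$ over all ordered pairs of the $\sqrt{n/S}$ groups of block-starts of size $\sqrt S$. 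The call-counting of \Cref{thm:ed}, now using the forced-cost bound and the $O(n)$ halting bound in place of ``$\le n$ collisions'', gives $\widetilde O(n/S)$ calls to $\mathcal{A}^*$, each costing $\widetilde O(\sqrt{nS})$, hence total time $\widetilde O(\sqrt{n^3/S})$ and space $\widetilde O(S)$.

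The main obstacle I expect is exactly the single-block (equal-pair) case: \textsc{LD} cannot prune such a call, so that branch of the recursion is never cut off, and one must verify both that it contributes only $\widetilde O(n)$ work and that threading it through $\mathcal{A}^*$ does not inflate the $\widetilde O(n/S)$ call count. A clean alternative worth keeping in reserve is to test a length-$\ell$ block for a duplicate by recursively calling \textsc{LD} on its two halves (an $\widetilde O(\ell)$-time \textsc{ED}-style subroutine via the recurrence $\mathrm{ED}(\ell)=2\,\mathrm{ED}(\ell/2)+\mathrm{LD}(\ell)$), which restores a fully prunable, uniform recursion at the price of a slightly more delicate parallelization of these nested calls.
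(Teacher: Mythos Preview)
Your approach is correct, but it differs from the paper's in a way worth noting. You bake the single-list case directly into the recursion by distinguishing ``equal pairs'' from ``disjoint pairs'' and carrying the resulting forced-versus-active bookkeeping through both the sequential and parallel analyses. The paper instead avoids equal pairs entirely by a two-step reduction: first it observes that the proof of \Cref{thm:ed} goes through verbatim with \textsc{LD} in place of \textsc{ED} (plus a global counter) to solve \textsc{$k$-ListCollision} on two lists of size $n$ in time $\max\{m,n\}\cdot\widetilde O(\sqrt{n/S})$, where $m$ is the number of collisions actually output; then it reduces \textsc{$n$-Collision} on a single list to a collection of two-list instances by decomposing the complete graph $K_n$ into complete bipartite graphs --- $2^{t-1}$ copies of $K_{n/2^t,n/2^t}$ for each $t=1,\ldots,\log n$ --- and running \textsc{$n$-ListCollision} on each piece until $n$ collisions accumulate, summing $\sum_t 2^{t-1}\cdot(n/2^t)\cdot\widetilde O(\sqrt{(n/2^t)/S})=\widetilde O(\sqrt{n^3/S})$ plus at most $n\cdot\widetilde O(\sqrt{n/S})$ for the output-charged part. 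This is more modular: the parallelization argument is reused as a black box and never has to accommodate the equal-pair branch, which is exactly the obstacle you flagged. Your direct recursion buys a one-shot proof without the auxiliary \textsc{$k$-ListCollision} problem or the $K_n$ decomposition, at the cost of the extra case analysis.
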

	\begin{proof}
		Notice that the collisions found in the algorithms in \Cref{thm:ed} are all distinct. By setting a global counter for the number of collisions already found and outputted, the algorithms and proofs in \Cref{thm:ed} can be copied verbatim to show a reduction to \textsc{LD} from the problem \textsc{$k$\nobreakdash-ListCollision}, where the input consists of two unordered lists of size $n$ that may contain duplicates, and the output contains $k$ collisions (if exist) between the two lists. If \textsc{LD} can be solved deterministically with space $\widetilde{O}(1)$ and time $\widetilde{O}(n)$, then the deterministic algorithm for \textsc{$k$\nobreakdash-ListCollision} works in space $\widetilde{O}(S)$ and time $\max\{m,n\}\cdot\widetilde{O}(\sqrt{n/S})$, where $m$ is the actual number of collisions outputted ($S$ can be arbitrary when the algorithm for \textsc{LD} is oblivious, and $S=O(1)$ in the general case).
		
		Now notice that the complete graph over $n$ vertices can be partitioned into a set of complete bipartite graphs, $2^{t-1}$ of which being of size $(n/2^t,n/2^t)$ for $t=1,\ldots,\log n$. We apply \textsc{$n$\nobreakdash-ListCollision} on each pairs of lists of size $n/2^t$ defined by these bipartite graphs, until $n$ collisions are found. This clearly solves the \textsc{$n$-Collision} problem with space $\widetilde{O}(S)$. Suppose that the number of collisions actually outputted on each bipartite graph is $m_1,m_2,\ldots$ respectively, then the total time usage is
		\begin{align*}
			&\max\left\{m_1,\frac{n}{2}\right\}\cdot\widetilde{O}\left(\sqrt{\frac{n}{2S}}\right)+
			\max\left\{m_2,\frac{n}{4}\right\}\cdot\widetilde{O}\left(\sqrt{\frac{n}{4S}}\right)+
			\max\left\{m_3,\frac{n}{4}\right\}\cdot\widetilde{O}\left(\sqrt{\frac{n}{4S}}\right)+\cdots \\
			\leq\  &(m_1+m_2+\cdots)\cdot\widetilde{O}\left(\sqrt{\frac{n}{2S}}\right)+
			\sum_{t=1}^{\log n}2^{t-1}\cdot \frac{n}{2^t}\cdot \widetilde{O}\left(\sqrt{\frac{n}{2^tS}}\right)
			\quad =\quad \widetilde{O}\left(\sqrt{n^3/S}\right). \qedhere
		\end{align*}
	\end{proof}

	Similarly, we have the corollary of \Cref{thm:ld} that any polynomial separation between randomized and deterministic time-space tradeoffs of \textsc{$n$-Collision} (or its variants such as \textsc{MemoryGame} \cite{chakrabarti2017time}) would answer \Cref{openproblem} on \textsc{ListDistinctness}.

	\bibliography{main}
	\bibliographystyle{alpha}
	
\end{document}